\documentclass[10pt,twocolumn,twoside] {IEEEtran}
\def\comment#1{}
\usepackage{graphicx,amsfonts,amsmath,amssymb,xcolor,ntheorem}
\usepackage{algorithm,algorithmic}
\usepackage{multirow}
\usepackage[noadjust]{cite}
\def\comment#1{}
\newenvironment{proof}[1][Proof]{\begin{trivlist}
		\item[\hskip \labelsep {\bfseries #1}]}{\end{trivlist}}

\newcommand{\Amat}{{\bf A}}

\newcommand{\Imat}{{\bf I}}

\newcommand{\Pmat}[0]{{{\bf P}}}
\newcommand{\Qmat}[0]{{{\bf Q}}}

\newcommand{\Umat}{{{\bf U}}}

\newcommand{\vv}{\boldsymbol{v}}
\newcommand{\wv}{\boldsymbol{w}}

\newcommand{\xv}{\boldsymbol{x}}
\newcommand{\yv}{\boldsymbol{y}}

\newcommand{\Lambdamat}{{\boldsymbol{\Lambda}}}

\newcommand{\Sigmamat}{\boldsymbol{\Sigma}}

\newcommand{\epsilonv}{\boldsymbol{\epsilon}}

\newcommand{\thetav}{\boldsymbol{\theta}}

\newcommand{\ts}{^{\top}}
\newcommand{\inv}{^{-1}}
\newcommand{\ie}{{\em i.e.}}

\newtheorem{definition}{Definition}
\newtheorem{theorem}{Theorem}

\newcommand{\mx}{^{\rm max}}
\begin{document}
\setlength{\parskip}{.02in}

\title{Convergence of the Generalized Alternating Projection Algorithm for Compressive Sensing}

\author{
\authorblockN{Xin Yuan, Hong Jiang and Paul Wilford} \\
\authorblockA{Bell Labs, Alcatel-Lucent, 600 Mountain Avenue, Murray Hill, NJ, 07974, USA}
}

\maketitle
\begin{abstract}
	The convergence of the generalized alternating projection (GAP) algorithm is studied in this paper to solve the compressive sensing problem $\yv = \Amat \xv + \epsilonv$.
	By assuming that $\Amat\Amat\ts$ is invertible, we prove that GAP converges linearly within a certain range of step-size when the sensing matrix $\Amat$ satisfies restricted isometry property (RIP) condition of $\delta_{2K}$, where $K$ is the sparsity of $\xv$. The theoretical analysis is extended to the adaptively iterative thresholding (AIT) algorithms, for which the convergence rate is also derived based on $\delta_{2K}$ of the sensing matrix.
	We further prove that, under the same conditions, the convergence rate of GAP is faster than that of AIT.
	Extensive simulation results confirm the theoretical assertions.  
\end{abstract}

\begin{IEEEkeywords}
	Compressive sensing, generalized alternating projection, convergence, restricted isometry property, iterative shrinkage/thresholding algorithms.
\end{IEEEkeywords}

\section{Introduction}
Consider the compressive sensing~\cite{Donoho06ITT,Candes06ITT,Baraniuk07SPM,Candes08SPM} problem
\begin{equation}
\yv = \Amat\xv + \epsilonv,
\end{equation}
where $\Amat\in {\mathbb R}^{M\times N}$ is the sensing matrix and usually $M\ll N$; $\xv\in {\mathbb R}^{N}$ is a sparse signal with $K$ nonzero entries ($K$-sparse), and
$\epsilonv\in{\mathbb R}^{M}$ denotes the additive noise.
Compressive sensing aims to find the sparsest solution of $\xv$. 
To solve this problem, extensive algorithms~\cite{Tropp07ITT,Donoho12OMP,Needell10CoSaMP,Needell10OMP} have been proposed. Due to the foundational work of~\cite{Donoho06ITT,Candes06ITT}, various algorithms~\cite{Beck09IST,Candes08L1,Daubechie04IST,Yuan14TSP} have been developed to solve the relaxed problem
\begin{eqnarray}\label{eq:L1_problem}
\min_{\xv}\|\xv\|_1, ~~{\text{subject to}}~~ \yv = \Amat\xv + \epsilonv,
\end{eqnarray}
where $\|\xv\|_1 = \sum_{n=1}^N|x_n|$ is the $\ell_1$-norm, {\em i.e.}, the summation of absolute values of each entry in $\xv$.  

The solution of (\ref{eq:L1_problem}) has been shown to be the sparest solution~\cite{cs06DonohoL1eqL0,candes2008enhancing}, if the sensing matrix $\Amat$ satisfies the restricted isometry property (RIP) condition:
\begin{eqnarray}\label{eq:rip_L1}
(1-\delta_S)\|\xv\|_2^2 \le \|\Amat_S \xv\|_2^2 \le (1+\delta_S)\|\xv\|_2^2,
\end{eqnarray}
where $\Amat_S$ is a subset of $\Amat$ constituted of $S$ columns from $\Amat$, and
$\delta_S \in (0,1)$ is the RIP constant.
Different conditions~\cite{cs06DonohoL1eqL0,Candes06ITT,Candes2008RIP,Baraniuk08RIP} have been studied on the value of $\delta_S$ for the guaranteed recovery of the sparse signal $\xv$. 

In this paper, we solve (\ref{eq:L1_problem}) via the generalized alternating projection (GAP) algorithm~\cite{Liao14GAP}. Specifically, we introduce the step-size parameter into GAP and propose the adaptively GAP algorithm. We prove in Section~\ref{Sec:GAP_con} that GAP converges linearly within a certain range of the step-size, rather than the fixed step-size as proved in~\cite{Liao14GAP}.
Connection of GAP and the adaptively iterative thresholding (AIT) algorithms~\cite{Daubechie04IST,WangAIT15,Beck09IST,Figueiredo07MM} is presented in Section~\ref{Sec:AIT_con} and the theoretical analysis is also extended to AIT.
We compare the convergence rates of GAP and AIT under the same RIP condition of the sensing matrix in Section~\ref{Sec:comp_con}.
Extensive simulation results are provided in Section~\ref{Sec:Sim} to verify the theoretical assertions.

\section{The Generalized Alternating Projection Algorithm}
\label{Sec:AGAP}
The generalized alternating projection (GAP) algorithm, originally proposed in~\cite{Liao14GAP}, has achieved excellent results in diverse compressive sensing systems in real world applications~\cite{Yuan13ICIP,Yuan14CVPR,Tsai15OE,Tsai15OL,Yang13ICIP,Yang14GMM,Yang14GMMonline,Yuan15JSTSP,Tsai15COSI,Llull14COSI,Yuan15FiO,Patrick13OE,Yuan15Lensless,Yuan15GMM,Llull15Optica,Stevens15ASCI,Yuan14Tree}.
In the following, we first review the original GAP algorithm and then introduce the parameter of step-size into GAP.

\subsection{Review the GAP algorithm in ~\cite{Liao14GAP}}
The original GAP proposed in~\cite{Liao14GAP} is developed to solve the weighted group $\ell_{2,1}$ problem. Here we simplify it to solve the $\ell_1$ problem in~\eqref{eq:L1_problem}.

GAP solves the equivalent problem of (\ref{eq:L1_problem}):
\begin{eqnarray} \label{eq:L1ball}
\min_{\xv, R} ~~\text{subject to}~~ \|\xv\|_1\le R ~~ \text{and}~~ \Amat \xv = \yv.
\end{eqnarray}
where $R$ is the radius of the $\ell_1$-ball.

GAP solves \eqref{eq:L1ball} as a series of alternating projection problem:
\begin{eqnarray}
(\wv_t, \thetav_t) &=& \arg\min_{\wv,\thetav} \frac{1}{2}\|\wv-\thetav\|_2,\nonumber\\
\text{subject to}~~ \|\thetav\|_1&\le& R^{(t)}~~\text{and}~~ \Amat\wv = \yv, \label{eg:gap_Rt}
\end{eqnarray}
which is equivalent to
\begin{eqnarray} \label{eq:gap_Solver}
\left(\wv_t, \thetav_t\right) &=&\arg \min_{\wv,\thetav} \frac{1}{2}\|\wv-\thetav\|_2^2 + \lambda_t \|\thetav\|_1, \nonumber \\
\text{subject to}&& \Amat\wv = \yv,
\end{eqnarray}
where $\lambda_t$ is the regularized parameter at $t$-th iteration  with $t$ denoting the iteration of the algorithm; it is related to $R^{(t)}$ in (\ref{eg:gap_Rt}), please referring to~\cite{Liao14GAP} for details.

By assuming that $\Amat\Amat\ts$ is invertible, (\ref{eq:gap_Solver}) is solved by alternating updating $\wv$ and $\thetav$ in~\cite{Liao14GAP} as below
\begin{eqnarray}
\wv_{t+1} &=& \thetav_t + \Amat\ts (\Amat\Amat\ts)\inv (\yv - \Amat\thetav_t), \label{eq:0w_t+1}\\
\thetav_t &=& \wv_t \odot \max\left\{1-\frac{\lambda_t}{|\wv_t|},0\right\}, \label{eq:0theta_t}
\end{eqnarray}
where $\odot$ is the element-wise product operator and $\lambda_t$ is the shrinkage threshold at $t$-th iteration.

\subsection{Adaptively GAP via Parameterizing the Step-size}
Rather than the GAP developed in~\cite{Liao14GAP}, which is based on the fixed step-size ($\alpha = 1$), we introduce the step-size parameter $\alpha$ to (\ref{eq:0w_t+1}), and then an adaptively GAP solver now becomes:
\begin{eqnarray}
\wv_{t+1} &=& \thetav_t + \alpha \Amat\ts (\Amat\Amat\ts)\inv (\yv - \Amat\thetav_t), \label{eq:w_t+1}\\
\thetav_t &=& \wv_t \odot \max\left\{1-\frac{\lambda_t}{|\wv_t|},0\right\}. \label{eq:theta_t}
\end{eqnarray}
In this work, we consider $\lambda_t$ selected as follows:
\begin{eqnarray} \label{eq:lambda_t}
\lambda_t &=& \tilde{w}_{t, m^*+1},\\
\tilde{\wv}_{t} &=& {\rm sort} (|\wv_t|, {\text {`descend'}}),
\end{eqnarray}
where $\tilde{w}_{t, m^*+1}$ denotes the $(m^*+1)$-th entry of $\tilde{\wv}_{t} $, which sorts the absolute value of $\wv_t$ from large to small. 
We need 
\begin{equation}
m^* \ge K.
\end{equation}
Similar selection of $m^*$ can also be found in the literature for AIT algorithms; for example, $m^* =K$ is used in~\cite{WangAIT15}.

According to~\cite{Candes2008RIP}, let $\xv^*$ be a $K$-sparse solution of the equation $\yv = \Amat\xv$, if the sensing matrix $\Amat$ satisfies the RIP
\begin{equation}
0<\delta_{2K} <1,
\end{equation}
then $\xv^*$ is the {\em unique} sparsest solution.
In our work, GAP provides a series of $\{\thetav_t\}_{t=1}^{\infty}$ with $m^*$-sparse, and we assume
\begin{eqnarray}\label{eq:delta_k+m}
0<\delta_{K+m^*} <1.
\end{eqnarray}
Since $m^*\ge K$, requirement (\ref{eq:delta_k+m}) implies that $0<\delta_{2K} <1$ is always satisfied in our case.
Without confusion, we use both $\delta$ and $\delta_{K+m^*}$ in the following derivation ($\delta= \delta_{K+m^*}$).
We prove the convergence of GAP based on $\delta_{K+m^*}$ in Section~\ref{Sec:GAP_con} and this proof is extended to AIT in Section~\ref{Sec:AIT_con}. Comparison of the convergence between GAP and AIT is presented in Section~\ref{Sec:comp_con}.

\section{Convergence of the Adaptively GAP}
\label{Sec:GAP_con}
Let us start the derivation from (\ref{eq:w_t+1})
\begin{eqnarray}
\wv_{t+1} &=& \thetav_t + \alpha \Amat\ts (\Amat\Amat\ts)\inv (\yv - \Amat\thetav_t),
\end{eqnarray}
and recall that $\xv^*$ is the true $K$-sparse solution
\begin{eqnarray}
\yv &=& \Amat\xv^* + \epsilonv.
\end{eqnarray}
We have
\begin{align}
\wv_{t+1} - \xv^* &= \thetav_t-\xv^* + \alpha \Amat\ts (\Amat\Amat\ts)\inv (\yv - \Amat\thetav_t) \nonumber\\
&= \thetav_t-\xv^* + \alpha \Amat\ts (\Amat\Amat\ts)\inv (\Amat\xv^* + \epsilonv - \Amat\thetav_t) \nonumber\\
&= \thetav_t-\xv^* - \alpha \Amat\ts (\Amat\Amat\ts)\inv \Amat(\thetav_t-\xv^* )\nonumber\\
&
\quad+ \alpha \Amat\ts (\Amat\Amat\ts)\inv \epsilonv. \label{eq:noise}
\end{align}

\subsection{Noiseless Case}
\label{Sec:noiseless}
Firstly consider the noiseless case, {\em i.e.}, $\epsilonv = 0$.
(\ref{eq:noise}) becomes
\begin{equation}
\wv_{t+1} - \xv^* 
=\thetav_t-\xv^* - \alpha \Amat\ts (\Amat\Amat\ts)\inv \Amat(\thetav_t-\xv^* ) \label{eq:noise_free}
\end{equation}
Taking the $\ell_2$-norm on both sides, we have
\begin{align}
&\|\wv_{t+1} - \xv^*\|^2_2 =  \|\thetav_t-\xv^*\|^2_2 \nonumber\\
&\qquad \qquad+ (\alpha^2- 2\alpha) \|\Amat\ts (\Amat\Amat\ts)\inv \Amat (\thetav_t - \xv^*)\|_2^2. \label{eq:noise_free1}
\end{align}
Now taking account of $\Amat\Amat\ts$ being invertible,
\begin{eqnarray} \label{eq:AAT_eig}
\Amat\Amat\ts&\stackrel{\rm def}{ = }& \Umat\Lambdamat \Umat\ts, \\
\Lambdamat &=& {\rm diag}\{e_1,\dots, e_M\} \quad {\rm and} \quad \Umat\Umat\ts = \Imat_M,
\end{eqnarray}
where $\Imat_M$ is the $M\times M$ identity matrix.

We further define
\begin{eqnarray}
e^{\rm max} &\stackrel{\rm def}{ = }& {\rm max} \{e_1,\dots, e_M\}, \\
\Lambdamat^{-\frac{1}{2}} &=& {\rm diag}\left\{\frac{1}{\sqrt{e_1}},\dots, \frac{1}{\sqrt{e_M}}\right\}.
\end{eqnarray}
Equation (\ref{eq:noise_free1}) becomes:
\begin{equation}
\|\wv_{t+1} - \xv^*\|^2_2 =  \|\thetav_t-\xv^*\|_2^2 + (\alpha^2 - 2\alpha)\|\Lambdamat^{-\frac{1}{2}} \Umat\ts \Amat (\thetav_t - \xv^*)\|_2^2.
\label{eq:wt_lambda}
\end{equation}
Introduce the following Lemma:
\newtheorem{lemma}{Lemma}
\begin{lemma}\label{Lemma:URIP}
	If $\Amat$ satisfies RIP~\ref{eq:rip_L1}, $\Umat\Amat$ also satisfies RIP if $\Umat$ is an orthonormal matrix ($\Umat\Umat\ts = \Imat$).
\end{lemma}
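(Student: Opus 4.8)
The plan is to reduce the claim to the elementary fact that left-multiplication by a square orthonormal matrix preserves the Euclidean norm. First I would fix a column-support size $S$ and the associated submatrix $\Amat_S$, and observe that extracting $S$ columns commutes with left-multiplication, so that $(\Umat\Amat)_S = \Umat\Amat_S$. Hence, for every conformable vector $\xv$,
\[ \|(\Umat\Amat)_S\,\xv\|_2^2 = \|\Umat(\Amat_S\xv)\|_2^2 = (\Amat_S\xv)\ts\,\Umat\ts\Umat\,(\Amat_S\xv). \]

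Next I would use that the $\Umat$ in question is the orthogonal factor of the eigendecomposition of $\Amat\Amat\ts$ in \eqref{eq:AAT_eig}, hence an $M\times M$ orthogonal matrix, for which $\Umat\Umat\ts = \Imat_M$ implies $\Umat\ts\Umat = \Imat_M$ as well. Substituting this into the display yields $\|(\Umat\Amat)_S\,\xv\|_2^2 = \|\Amat_S\xv\|_2^2$ identically in $\xv$ and in the support of size $S$.

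Finally I would insert this identity into the two-sided RIP chain \eqref{eq:rip_L1} written for $\Amat$: since the middle quantity $\|\Amat_S\xv\|_2^2$ is unchanged, the same inequalities hold verbatim for $\Umat\Amat$ with the identical constant $\delta_S$, for every $S$. In particular $\delta_{K+m^*}$ is the same for $\Amat$ and for $\Umat\Amat$ (equivalently for $\Umat\ts\Amat$, since $\Umat\ts$ is also orthonormal), which is exactly the fact needed to bound the term $\|\Lambdamat^{-\frac{1}{2}}\Umat\ts\Amat(\thetav_t-\xv^*)\|_2^2$ appearing in \eqref{eq:wt_lambda}. There is essentially no obstacle here; the only point deserving an explicit sentence is that the hypothesis $\Umat\Umat\ts=\Imat$ upgrades to $\Umat\ts\Umat=\Imat$ precisely because $\Umat$ is square, as it is in the application (a non-square matrix with orthonormal rows would preserve only the upper RIP bound, but that generality is not required).
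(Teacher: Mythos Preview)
Your argument is correct and follows essentially the same route as the paper's proof: both expand $\|\Umat\Amat_S\xv\|_2^2$, use $\Umat\ts\Umat=\Imat$ to reduce it to $\|\Amat_S\xv\|_2^2$, and then invoke the RIP bounds for $\Amat$. You are actually more careful than the paper in flagging that the stated hypothesis $\Umat\Umat\ts=\Imat$ yields $\Umat\ts\Umat=\Imat$ only because $\Umat$ is square in the intended application.
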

\begin{proof}
	\begin{equation}
	\|\Umat\Amat_S\xv\|^2_2 = \xv\ts \Amat_S\ts\Umat\ts\Umat\Amat_S\xv = \xv\ts \Amat_S\ts\Amat_S\xv = \|\Amat_S\xv_S\|_2^2. \nonumber
	\end{equation}	
	Since
	\begin{eqnarray}
	(1-\delta)\|\xv\|_2^2\le \|\Amat_S\xv\|_2^2 \le 	(1+\delta)\|\xv\|_2^2, \nonumber
	\end{eqnarray}
	we have
	\begin{eqnarray}
	(1-\delta)\|\xv\|_2^2\le \|\Umat\Amat_S\xv\|_2^2 \le 	(1+\delta)\|\xv\|_2^2. \nonumber
	\end{eqnarray}		
\end{proof}

Recall (\ref{eq:wt_lambda}) and using RIP
\begin{align}
\|\Lambdamat^{-\frac{1}{2}} \Umat\ts \Amat (\thetav_t - \xv^*)\|_2^2 &\ge \frac{1}{e^{\rm max}} \|\Umat\ts \Amat (\thetav_t - \xv^*)\|_2^2\\
&\ge \frac{(1-\delta)}{e^{\rm max}}\|\thetav_t - \xv^*\|_2^2,\label{eq:thetaRIP}
\end{align}
where the RIP is based on the non-zero entries of $(\thetav_t - \xv^*)$.
Since we have imposed in (\ref{eq:theta_t})-(\ref{eq:lambda_t}) that $\thetav_t$ has at most $m^*$ nonzero entries, $(\thetav_t - \xv^*)$ has at most $(m^*+K)$ nonzero elements. Thereby $\delta = \delta_{m^*+K}$ as mentioned in Section~\ref{Sec:AGAP}.

Considering $\alpha\in (0,2)$, we have $\alpha^2-2\alpha <0$, and from (\ref{eq:wt_lambda}) and (\ref{eq:thetaRIP}),
\begin{eqnarray}
\|\wv_{t+1} - \xv^*\|^2_2 \le \left[1 + \frac{(\alpha^2-2\alpha)(1-\delta)}{e^{\rm max}}\right]\|\thetav_t - \xv^*\|_2^2. \label{eq:w+1_theta}
\end{eqnarray}
Note that when $\Amat$ is fixed, $e^{\rm max}$ is fixed and $\delta$ is also fixed given $K$ and $m^*$.
If we can find the relationship between $\|\thetav_t - \xv^*\|_2^2$ and $\|\wv_{t} - \xv^*\|_2^2$, we have the convergence condition of GAP.
In order to do so, 
we first introduce the following lemma:
\begin{lemma}\label{le:2norm}
	For any $\xv, \yv \in {\mathbb R}^N$,
	\begin{eqnarray}
	\|\xv+\yv\|_2^2 \le 2 (\|\xv\|_2^2 + \|\yv\|_2^2).
	\end{eqnarray}
\end{lemma}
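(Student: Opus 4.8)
The plan is to reduce the inequality to the nonnegativity of a square, so essentially no machinery is needed. First I would expand the left-hand side using the inner-product form of the Euclidean norm,
\[
\|\xv+\yv\|_2^2 = \|\xv\|_2^2 + 2\,\xv\ts\yv + \|\yv\|_2^2,
\]
which reduces the claim to bounding the cross term $2\,\xv\ts\yv$ by $\|\xv\|_2^2 + \|\yv\|_2^2$.

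Next I would invoke the elementary bound $2\,\xv\ts\yv \le \|\xv\|_2^2 + \|\yv\|_2^2$, which is just a restatement of $\|\xv-\yv\|_2^2 = \|\xv\|_2^2 - 2\,\xv\ts\yv + \|\yv\|_2^2 \ge 0$ (equivalently, the arithmetic–geometric mean inequality combined with Cauchy–Schwarz). Substituting this into the expansion gives
\[
\|\xv+\yv\|_2^2 \le \|\xv\|_2^2 + \big(\|\xv\|_2^2 + \|\yv\|_2^2\big) + \|\yv\|_2^2 = 2\big(\|\xv\|_2^2 + \|\yv\|_2^2\big),
\]
which is exactly the statement; equality holds iff $\xv=\yv$. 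An equivalent one-line route is to use convexity of the map $\zv \mapsto \|\zv\|_2^2$: Jensen's inequality gives $\big\|\tfrac12(\xv+\yv)\big\|_2^2 \le \tfrac12\|\xv\|_2^2 + \tfrac12\|\yv\|_2^2$, and multiplying by $4$ yields the bound.

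There is no genuine obstacle here; the only point worth flagging is that the constant $2$ is not improvable (it is attained at $\xv=\yv$), so this deliberately crude bound is precisely the form needed later when $\thetav_t-\xv^*$ is split and one wants to relate $\|\thetav_t-\xv^*\|_2^2$ to $\|\wv_t-\xv^*\|_2^2$.
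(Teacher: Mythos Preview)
Your proof is correct and entirely standard. The paper itself does not prove this lemma at all; it simply writes ``Proof can be found in~\cite{WangAIT15} and thus omitted here,'' so there is no in-paper argument to compare against, and your expansion-plus-$\|\xv-\yv\|_2^2\ge 0$ route (or the equivalent convexity version) is exactly the kind of elementary justification one would expect.
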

Proof can be found in~\cite{WangAIT15} and thus omitted here.

We further define the following sets:
\begin{definition}
	\begin{equation}
	{\cal I}_+~:~ \forall i,  {\text{ that }} |x^*_i|>0, \label{eq:I+}\\
	\end{equation}
	where $x^*_i$ denotes the $i$-th entry of $\xv^*$.
\end{definition}
\begin{definition}
	\begin{equation}
	{\cal J}^{(t)}_+~:~ \forall i,  {\text{ that }} |\theta_{t,i}|>0, \label{eq:J+}\\
	\end{equation}
	where $\theta_{t,i}$ denotes the $i$-th entry of $\thetav_t$.
\end{definition}

With these definitions:
\begin{eqnarray}
\|\thetav_t - \xv^*\|_2^2 &=& \|\thetav_{t,{{\cal J}^{(t)}_+}} - \xv^*_{{\cal J}^{(t)}_+}\|_2^2 + \|\xv^*_{{\cal I}_+\backslash{\cal J}^{(t)}_+}\|_2^2 \label{eq:theta_t_x_J}
\end{eqnarray}
where $\thetav_{t,{{\cal J}^{(t)}_+}}$ denotes all the entries of $\thetav_t$ in the set ${\cal J}^{(t)}_+$ (similarly for $\xv^*_{{\cal J}^{(t)}_+}$) and $\xv^*_{{\cal I}_+\backslash{\cal J}^{(t)}_+}$ denotes the entries of $\xv^*$ in ${\cal I}_+$ but not in ${\cal J}^{(t)}_+$.
From (\ref{eq:theta_t_x_J}) and using Lemma~\ref{le:2norm}, we have
\begin{align}
\|\thetav_t - \xv^*\|_2^2 &= \|\thetav_{t,{{\cal J}^{(t)}_+}} -\wv_{t,{{\cal J}^{(t)}_+}}  + \wv_{t,{{\cal J}^{(t)}_+}} - \xv^*_{{\cal J}^{(t)}_+}\|_2^2 \nonumber\\
&+ \|\xv^*_{{\cal I}_+\backslash{\cal J}^{(t)}_+} - \wv_{t,{\cal I}_+\backslash{\cal J}^{(t)}_+} + \wv_{t,{\cal I}_+\backslash{\cal J}^{(t)}_+}\|_2^2 \\
&\le  2\|\thetav_{t,{{\cal J}^{(t)}_+}} -\wv_{t,{{\cal J}^{(t)}_+}}\|_2^2  + 2\|\wv_{t,{{\cal J}^{(t)}_+}} - \xv^*_{{\cal J}^{(t)}_+}\|_2^2 \nonumber \\
&+2\|\wv_{t,{\cal I}_+\backslash{\cal J}^{(t)}_+} - \xv^*_{{\cal I}_+\backslash{\cal J}^{(t)}_+} \|^2_2 + 2\|\wv_{t,{\cal I}_+\backslash{\cal J}^{(t)}_+}\|_2^2 \nonumber\\
& = 2\|\wv_{t,{{\cal J}^{(t)}_+}\cup{\cal I}_+} - \xv^*_{{\cal J}^{(t)}_+\cup{\cal I}_+}\|_2^2 \nonumber\\
&\qquad+ 2\|\thetav_{t,{{\cal J}^{(t)}_+}} -\wv_{t,{{\cal J}^{(t)}_+}}\|_2^2 \nonumber\\
&\qquad+ 2\|\wv_{t,{\cal I}_+\backslash{\cal J}^{(t)}_+}\|_2^2. \label{eq:theta_t_x_3}
\end{align}
Separately consider the three terms on the right-hand side of (\ref{eq:theta_t_x_3}),
\begin{itemize}
	\item[1)]
	The first term:
	\begin{eqnarray}
	2\|\wv_{t,{{\cal J}^{(t)}_+}\cup{\cal I}_+} - \xv^*_{{\cal J}^{(t)}_+\cup{\cal I}_+}\|_2^2 \le 2\|\wv_t - \xv^*\|_2^2. \label{eq:1_term}
	\end{eqnarray}
	\item[2)]
	The second term:
	\begin{align}
	2\|\thetav_{t,{{\cal J}^{(t)}_+}} -\wv_{t,{{\cal J}^{(t)}_+}}\|_2^2 &= 2 m^* \lambda_t^2 \\
	&\le  2 m^* \|\wv_{t,{{\cal J}^{(t)}_+}} - \xv^*_{{{\cal J}^{(t)}_+}}\|_2^2,\label{eq:2term}
	\end{align}
	where $\lambda_t^2\le \|\wv_{t,{{\cal J}^{(t)}_+}} - \xv^*_{{{\cal J}^{(t)}_+}}\|_2^2$ is from
	\begin{eqnarray}
	\lambda_t &\le & \max_{i \in {{\cal J}^{(t)}_+}} |w_{t,i} - x^*_i|.
	\end{eqnarray}
	This can be proved by considering the following two cases:
	
	($i$) ${\cal I}_+\subseteq {{\cal J}^{(t)}_+}$: 
	\begin{eqnarray}
	\lambda_t &= & \tilde{w}_{t, m^*+1} = |\tilde{w}_{t, m^*+1}| = |\tilde{w}_{t, m^*+1} - x^*_{m^* +1}| \nonumber\\
	&\le & \max_{i \in {{\cal J}^{(t)}_+}} |w_{t,i} - x^*_i|.
	\end{eqnarray}
	($ii$) ${\cal I}_+\nsubseteq {{\cal J}^{(t)}_+}$: since we only consider entries in ${{\cal J}^{(t)}_+}$, there exists $i_0 \in {{\cal J}^{(t)}_+}$ but $i_0 \notin {\cal I}_+$,
	\begin{eqnarray}
	\lambda_t &= & \tilde{w}_{t, m^*+1} \le \tilde{w}_{t, i_0}  \stackrel{x_{i_0}^* = 0}{=} |\tilde{w}_{t, i_0} -\tilde{x}^*_{i_0} |\nonumber\\
	&\le& \max_{i \in {{\cal J}^{(t)}_+}} |w_{t,i} - x^*_i|.
	\end{eqnarray}
	
	Following this, based on (\ref{eq:2term}),
	\begin{eqnarray}
	2\|\thetav_{t,{{\cal J}^{(t)}_+}} -\wv_{t,{{\cal J}^{(t)}_+}}\|_2^2 
	&\le & 2 m^* \|\wv_{t} - \xv^*\|_2^2. \label{eq:2_term}
	\end{eqnarray}
	\item[3)]
	The third term:
	\begin{eqnarray}
	2\|\wv_{t,{\cal I}_+\backslash{\cal J}^{(t)}_+}\|_2^2 &\le& 2 |{{\cal I}_+\backslash{\cal J}^{(t)}_+} | \max_{i\in {\cal I}_+\backslash{\cal J}^{(t)}_+}	|w_{t,i}|^2 \\
	&\le& 2|{{\cal I}_+\backslash{\cal J}^{(t)}_+} | \lambda_t^2, \label{eq:num_J+}
	\end{eqnarray}
	where $|{{\cal I}_+\backslash{\cal J}^{(t)}_+} |$ denotes the number of elements in the set ${{\cal I}_+\backslash{\cal J}^{(t)}_+} $, and 
	because $m^*\ge K$,
	we have
	\begin{eqnarray}
	|{{\cal I}_+\backslash{\cal J}^{(t)}_+} | \le |{\cal J}^{(t)}_+ \backslash  {\cal I}_+|.
	\end{eqnarray}
	This can be obtained from the following derivation:
	Considering there are $m_0$ elements in ${\cal J}^{(t)}_+ \cap {\cal I}_+$ with $m_0 \le K$, then
	$|{{\cal I}_+\backslash{\cal J}^{(t)}_+} | = K -m_0$ , $|{\cal J}^{(t)}_+ \backslash  {\cal I}_+| = m^* -m_0$, since $m^*\ge K$, $m^*-m_0 \ge K-m_0$.
	
	(\ref{eq:num_J+}) now becomes:
	\begin{eqnarray}
	2\|\wv_{t,{\cal I}_+\backslash{\cal J}^{(t)}_+}\|_2^2 &\le& 2|{\cal J}^{(t)}_+ \backslash  {\cal I}_+| \lambda_t^2  \\
	&\le& 2|{\cal J}^{(t)}_+ \backslash  {\cal I}_+|  \min_{i\in {\cal J}^{(t)}_+ \backslash  {\cal I}_+} |w_{t,i}|^2 \label{eq:lambda_t<wt}\\
	&\le& 2\|\wv_{t,{\cal J}^{(t)}_+\backslash{\cal I}_+}\|_2^2\\
	&=& 2\|\wv_{t,{\cal J}^{(t)}_+\backslash{\cal I}_+} - \xv^*_{{\cal J}^{(t)}_+\backslash{\cal I}_+} \|_2^2\nonumber\\
	&\le & 2  \|\wv_{t} - \xv^*\|_2^2,  \label{eq:3_term}
	\end{eqnarray}
	where (\ref{eq:lambda_t<wt}) is from the selection of $\lambda_t$ in (\ref{eq:lambda_t}), \ie, $\lambda_t \le \min_{i\in {\cal J}^{(t)}_+ } |w_{t,i}|$.
\end{itemize}
Plugging the above three items, {\em i.e.}, (\ref{eq:1_term}), (\ref{eq:2_term}) and (\ref{eq:3_term}) into (\ref{eq:theta_t_x_3}), we have
\begin{eqnarray}
\|\thetav_t - \xv^*\|^2_2 &\le & (4+2m^*)\|\wv_{t} - \xv^*\|_2^2. \label{eq:theta_t_wt}
\end{eqnarray}

Combing (\ref{eq:theta_t_wt}) and (\ref{eq:w+1_theta}), we have
\begin{align}
&\|\wv_{t+1} - \xv^*\|^2_2 \le \left[1 + \frac{(\alpha^2-2\alpha)(1-\delta)}{e^{\rm max}}\right]\|\thetav_t - \xv^*\|_2^2 \\
&\quad\le (4+2m^*)\left[1 + \frac{(\alpha^2-2\alpha)(1-\delta)}{e^{\rm max}}\right] \|\wv_{t} - \xv^*\|_2^2, \label{eq:w_t+1_w_t}
\end{align}
where we have imposed $\alpha\in (0,2)$.
We further impose
\begin{eqnarray}
{(4+2m^*)\left[1 + \frac{(\alpha^2-2\alpha)(1-\delta)}{e^{\rm max}}\right]} <1, \label{eq:cover_1}
\end{eqnarray}
which is
\begin{eqnarray}
{\left[1 + \frac{(\alpha^2-2\alpha)(1-\delta)}{e^{\rm max}}\right]}<\frac{1}{{4+2m^*}}. \label{eq:m_star_delta}
\end{eqnarray}
Following this, we have
\begin{eqnarray}
|\alpha-1| < \sqrt{1-\frac{e^{\rm max}}{(1-\delta)}\frac{(3+2m^*)}{(4+2m^*)}},
\end{eqnarray}
which is 
\begin{eqnarray}
&& 1- \sqrt{1-\frac{e^{\rm max}}{(1-\delta)}\frac{(3+2m^*)}{(4+2m^*)}} <\alpha \nonumber\\
&&\qquad \qquad \qquad  < 1+ \sqrt{1-\frac{e^{\rm max}}{(1-\delta)}\frac{(3+2m^*)}{(4+2m^*)}}. \label{eq:alpha_emax}
\end{eqnarray}
In order to show the existence of $\alpha$, we need
\begin{eqnarray}
1-\frac{e^{\rm max}}{(1-\delta)}\frac{(3+2m^*)}{(4+2m^*)} >0. \label{eq:delta_noiseless}
\end{eqnarray}
This requires
\begin{eqnarray}
0<\delta <1- e^{\rm max}\frac{(3+2m^*)}{(4+2m^*)} .\label{eq:alpha_noiseless}
\end{eqnarray}
assuming $e\mx <\frac{(2m^* + 4)}{(2m^* + 3)}$.

The above derivation leads the following theorem:
\begin{theorem}	
	\label{thm:anytime}
	Let $\{\wv_t\}_{t=1}^{\infty}$ be a sequence generated by the GAP algorithm presented in Section~\ref{Sec:AGAP} for $\yv = \Amat\xv$, with $\xv^*$ being $K$-sparse signal satisfying $\yv = \Amat\xv^*$.
	Assume that the sensing matrix $\Amat$ satisfies the RIP
	\begin{equation}
	0<\delta_{m^*+K} < 1-\frac{(2m^* +3)}{ (2m^* +4)} e^{\rm max},
	\end{equation}
	where
	\begin{itemize}
		\item $m^*\ge K$ is the sparsity of $\{\thetav_t\}_{t=1}^{\infty}$ generated by GAP;
		\item $e^{\rm max}$ is maximum eigenvalue of $\Amat\Amat\ts$ with $e\mx <\frac{(2m^* + 4)}{(2m^* + 3)}$,
	\end{itemize}
	and the step size $\alpha$ 
	{\small
		\begin{eqnarray}
		1- \sqrt{1-\frac{e^{\rm max}(3+2m^*)}{(1-\delta)(4+2m^*)}} <\alpha < 1+ \sqrt{1-\frac{e^{\rm max}(3+2m^*)}{(1-\delta)(4+2m^*)}}, \nonumber
		\end{eqnarray}}
	then
	\begin{equation}
	\|\wv_{t} - \xv^*\|^2_2 \le \gamma_1^t\|\wv_{0} - \xv^*\|_2^2,
	\end{equation}
	where 
	\begin{eqnarray}
	\gamma_1 &=& (4+2m^*)\left[1 + \frac{(\alpha^2-2\alpha)(1-\delta)}{e^{\rm max}}\right]<1,
	\end{eqnarray}
	and $\delta = \delta_{m^* + K}$. $\{\wv_t\}$ converges to the true signal $\xv^*$.
\end{theorem}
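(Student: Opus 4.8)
The statement is the endpoint of the chain of inequalities already developed between~\eqref{eq:w_t+1} and~\eqref{eq:alpha_noiseless}; the plan is to organize that chain into three stages and finish with a one-line induction. In the \emph{first stage} I control a single GAP update. Starting from the noiseless identity~\eqref{eq:noise_free}, squaring the $\ell_2$-norm, and using that $\Amat\ts(\Amat\Amat\ts)\inv\Amat$ is a symmetric idempotent (so the cross term merges with the quadratic one) gives~\eqref{eq:noise_free1}. Writing $\Amat\Amat\ts=\Umat\Lambdamat\Umat\ts$ turns the quadratic term into $\|\Lambdamat^{-1/2}\Umat\ts\Amat(\thetav_t-\xv^*)\|_2^2$; bounding $\Lambdamat^{-1/2}$ from below by $1/\sqrt{e\mx}$ and applying Lemma~\ref{Lemma:URIP} to $\thetav_t-\xv^*$ — which has at most $m^*+K$ nonzero entries, hence the relevant RIP constant is $\delta=\delta_{m^*+K}$ — produces the lower bound $\frac{1-\delta}{e\mx}\|\thetav_t-\xv^*\|_2^2$. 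Since $\alpha\in(0,2)$ makes $\alpha^2-2\alpha<0$, this yields the one-step estimate~\eqref{eq:w+1_theta}.

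In the \emph{second stage} I bound $\|\thetav_t-\xv^*\|_2^2$ by $\|\wv_t-\xv^*\|_2^2$. I split the index set by the supports ${\cal I}_+$ of $\xv^*$ and ${\cal J}^{(t)}_+$ of $\thetav_t$, insert $\wv_t$, and invoke Lemma~\ref{le:2norm} to reach the three-term estimate~\eqref{eq:theta_t_x_3}. The first term is disposed of by restricting $\wv_t-\xv^*$ to a subset of coordinates. In the second term the surviving coordinates (at most $m^*$ of them) are each shifted by exactly $\lambda_t$, so the term is at most $2m^*\lambda_t^2$, and then $\lambda_t\le\max_{i\in{\cal J}^{(t)}_+}|w_{t,i}-x^*_i|$ — checked by splitting on whether ${\cal I}_+\subseteq{\cal J}^{(t)}_+$ — upgrades it to $2m^*\|\wv_t-\xv^*\|_2^2$. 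The third term uses $m^*\ge K$, which forces $|{\cal I}_+\backslash{\cal J}^{(t)}_+|\le|{\cal J}^{(t)}_+\backslash{\cal I}_+|$, together with the threshold facts $\max_{i\in{\cal I}_+\backslash{\cal J}^{(t)}_+}|w_{t,i}|\le\lambda_t\le\min_{i\in{\cal J}^{(t)}_+}|w_{t,i}|$ to trade the ``missed'' coordinates for an equal number of ``spurious'' ones, which then sit inside $\|\wv_t-\xv^*\|_2^2$. Summing the three bounds gives~\eqref{eq:theta_t_wt}, that is $\|\thetav_t-\xv^*\|_2^2\le(4+2m^*)\|\wv_t-\xv^*\|_2^2$.

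In the \emph{third stage} I compose the two estimates: chaining~\eqref{eq:w+1_theta} with~\eqref{eq:theta_t_wt} gives $\|\wv_{t+1}-\xv^*\|_2^2\le\gamma_1\|\wv_t-\xv^*\|_2^2$ with $\gamma_1=(4+2m^*)\bigl[1+(\alpha^2-2\alpha)(1-\delta)/e\mx\bigr]$. Requiring $\gamma_1<1$ is algebraically equivalent to $|\alpha-1|<\sqrt{1-e\mx(3+2m^*)/[(1-\delta)(4+2m^*)]}$, i.e.\ to the stated interval around $\alpha=1$, and this interval is nonempty exactly when the radicand is positive, namely $\delta<1-\frac{2m^*+3}{2m^*+4}e\mx$, which leaves room for $\delta>0$ only if $e\mx<\frac{2m^*+4}{2m^*+3}$ — precisely the hypotheses of the theorem. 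A one-line induction on $t$ from $\|\wv_{t+1}-\xv^*\|_2^2\le\gamma_1\|\wv_t-\xv^*\|_2^2$ then gives $\|\wv_t-\xv^*\|_2^2\le\gamma_1^t\|\wv_0-\xv^*\|_2^2$, and $\gamma_1<1$ forces $\wv_t\to\xv^*$.

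The genuine work is in the second stage, the support-based decomposition of $\|\thetav_t-\xv^*\|_2^2$. The two delicate points are the bound on $\lambda_t$, which must be argued in two cases according to whether the true support is contained in the current support, and the cancellation of ${\cal I}_+\backslash{\cal J}^{(t)}_+$ against ${\cal J}^{(t)}_+\backslash{\cal I}_+$, which relies entirely on the cardinality inequality supplied by $m^*\ge K$. The first stage is a projector identity plus one application each of Lemmas~\ref{Lemma:URIP} and~\ref{le:2norm}, and the third stage is solving a quadratic inequality in $\alpha$ and iterating a scalar contraction, so both are routine bookkeeping.
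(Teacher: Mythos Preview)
Your proposal is correct and follows the paper's own argument essentially step for step: the projector identity giving~\eqref{eq:noise_free1}, the eigendecomposition plus Lemma~\ref{Lemma:URIP} for~\eqref{eq:w+1_theta}, the support split with Lemma~\ref{le:2norm} and the two-case $\lambda_t$ bound for~\eqref{eq:theta_t_wt}, and the quadratic in $\alpha$ for the contraction. One trivial slip in your closing summary: Lemma~\ref{le:2norm} is not used in the first stage of the noiseless case (the projector identity gives an exact equality there); it enters only in the second stage and in the noisy variant.
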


\subsection{Noisy Case}
Recall
\begin{eqnarray}
\wv_{t+1} - \xv^* 
&=& \thetav_t-\xv^* - \alpha \Amat\ts (\Amat\Amat\ts)\inv \Amat(\thetav_t-\xv^* )\nonumber\\
&&+ \alpha \Amat\ts (\Amat\Amat\ts)\inv \epsilonv. \label{eq:noise1}
\end{eqnarray}
Taking $\ell_2$-norm on both sides and using Lemma~\ref{le:2norm},
\begin{eqnarray}
\|\wv_{t+1} - \xv^* \|_2^2 &\le& 2\|\thetav_t-\xv^*\|_2^2 \nonumber\\
&&+ 2(\alpha^2 - 2\alpha)\|\Lambdamat^{-\frac{1}{2}} \Umat \Amat (\thetav_t - \xv^*)\|_2^2 \nonumber\\
&&+2\alpha^2\|\Amat\ts (\Amat\Amat\ts)\inv \epsilonv\|_2^2.
\end{eqnarray}
Following the derivation in Section~\ref{Sec:noiseless} by multiplying a factor of 2, we have
\begin{eqnarray}
\|\wv_{t+1} - \xv^* \|_2^2 &\le& 2\left[1 + \frac{(\alpha^2-2\alpha)(1-\delta)}{e^{\rm max}}\right]\|\thetav_t - \xv^*\|_2^2 \nonumber\\
&&+ 2\alpha^2\|\Amat\ts (\Amat\Amat\ts)\inv \epsilonv\|_2^2 \\
&\le & (8+4m^*)\left[1 + \frac{(\alpha^2-2\alpha)(1-\delta)}{e^{\rm max}}\right]\nonumber\\
&&\times \|\thetav_t - \xv^*\|_2^2 + 2\alpha^2\|\Lambdamat^{-\frac{1}{2}}\Umat\ts\epsilonv\|_2^2\\
&\le & (8+4m^*)\left[1 + \frac{(\alpha^2-2\alpha)(1-\delta)}{e^{\rm max}}\right]\nonumber\\
&&\times \|\thetav_t - \xv^*\|_2^2 + \frac{2\alpha^2}{e^{\rm min}}\|\epsilonv\|_2^2.
\end{eqnarray}

For convergence, we need
\begin{eqnarray}
(8+4m^*)\left[1 + \frac{(\alpha^2-2\alpha)(1-\delta)}{e^{\rm max}}\right] <1. \label{eq:noise_1}
\end{eqnarray}
This is
\begin{eqnarray}
&& 1- \sqrt{1-\frac{e^{\rm max}(7+4m^*)}{(1-\delta)(8+4m^*)}} <\alpha \nonumber\\
&& \qquad \qquad \qquad \quad < 1+ \sqrt{1-\frac{e^{\rm max}(7+4m^*)}{(1-\delta)(8+4m^*)}}.
\end{eqnarray}

In order to prove that there exists $\alpha$, we need
\begin{eqnarray}
1-\frac{e^{\rm max}(7+4m^*)}{(1-\delta)(8+4m^*)} >0,
\end{eqnarray}
which leads to
\begin{eqnarray}
\delta < 1-\frac{4m^* +7}{ 4m^* +8} e^{\rm max}.
\end{eqnarray}
The reconstruction error is bounded by the noise term
\begin{eqnarray}
\frac{2\alpha^2}{e^{\rm min}}\|\epsilonv\|_2^2 &\le & \frac{2\left(1+\sqrt{1-\frac{e^{\rm max}(7+4m^*)}{(1-\delta)(8+4m^*)}}\right)^2}{e^{\rm min}} \|\epsilonv\|_2^2.
\end{eqnarray}

The above derivation leads to the following theorem in the noisy case:
\begin{theorem}	
	\label{thm:anytime_noise}
	Let $\{\wv_t\}_{t=1}^{\infty}$ be a sequence generated by the algorithm presented in Section~\ref{Sec:AGAP} for $\yv = \Amat\xv + \epsilonv$, with $\xv^*$ being $K$-sparse signal satisfying $\yv = \Amat\xv^*$.
	Assume that the sensing matrix $\Amat$ satisfies the RIP
	\begin{equation}
	0<\delta_{m^*+K} < 1-\frac{4m^* +7}{ 4m^* +8} e^{\rm max},
	\end{equation}
	where
	\begin{itemize}
		\item $m^*\ge K$ is the sparsity of $\{\thetav_t\}_{t=1}^{\infty}$ generated GAP in Section~\ref{Sec:AGAP};
		\item $e^{\rm max}$ is maximum eigenvalue of $\Amat\Amat\ts$ and $e\mx <\frac{(4m^* + 8)}{(4m^* + 7)}$,
	\end{itemize}
	and the step size $\alpha$ 
	{\small
		\begin{eqnarray}
		1- \sqrt{1-\frac{e^{\rm max}(7+4m^*)}{(1-\delta)(8+4m^*)}} <\alpha < 1+ \sqrt{1-\frac{e^{\rm max}(7+4m^*)}{(1-\delta)(8+4m^*)}}, \nonumber
		\end{eqnarray}}
	then
	\begin{equation}
	\|\wv_{t} - \xv^*\|^2_2 \le \gamma_2^t\|\wv_{0} - \xv^*\|_2^2 + \frac{2\alpha^2}{e^{\rm min}}\|\epsilonv\|_2^2,
	\end{equation}
	where 
	\begin{eqnarray}
	\gamma_2 &=& (8+4m^*)\left[1 + \frac{(\alpha^2-2\alpha)(1-\delta)}{e^{\rm max}}\right]<1,
	\end{eqnarray}
	and $e^{\rm min}$ is the minimum eigenvalue of $\Amat\Amat\ts$. $\{\wv_t\}$ converges to the true signal $\xv^*$ until reaching some error bound.
\end{theorem}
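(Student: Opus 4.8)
The plan is to retrace the noiseless analysis of Section~\ref{Sec:noiseless} almost verbatim, carrying the noise term through the estimates and absorbing the extra factors of $2$ it produces. I would start from the exact expansion \eqref{eq:noise1}, which writes $\wv_{t+1}-\xv^*$ as the sum of a ``signal part'' $\thetav_t-\xv^*-\alpha\Amat\ts(\Amat\Amat\ts)\inv\Amat(\thetav_t-\xv^*)$ and a ``noise part'' $\alpha\Amat\ts(\Amat\Amat\ts)\inv\epsilonv$. Applying Lemma~\ref{le:2norm} to this two-term sum immediately gives $\|\wv_{t+1}-\xv^*\|_2^2\le 2\|\thetav_t-\xv^*-\alpha\Amat\ts(\Amat\Amat\ts)\inv\Amat(\thetav_t-\xv^*)\|_2^2+2\alpha^2\|\Amat\ts(\Amat\Amat\ts)\inv\epsilonv\|_2^2$.

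For the signal part I would reuse the noiseless chain step by step: expand the squared norm so that the only surviving cross term is $(\alpha^2-2\alpha)\|\Amat\ts(\Amat\Amat\ts)\inv\Amat(\thetav_t-\xv^*)\|_2^2$; rewrite it through the eigendecomposition \eqref{eq:AAT_eig} as $(\alpha^2-2\alpha)\|\Lambdamat^{-\frac12}\Umat\ts\Amat(\thetav_t-\xv^*)\|_2^2$; since $\alpha^2-2\alpha<0$ for $\alpha\in(0,2)$, lower-bound the norm by $\frac{(1-\delta)}{e\mx}\|\thetav_t-\xv^*\|_2^2$ using Lemma~\ref{Lemma:URIP} together with the fact that $\thetav_t-\xv^*$ has at most $m^*+K$ nonzero entries, so that $\delta=\delta_{m^*+K}$ is the relevant RIP constant; and finally insert the already-proven bound \eqref{eq:theta_t_wt}, $\|\thetav_t-\xv^*\|_2^2\le(4+2m^*)\|\wv_t-\xv^*\|_2^2$. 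The net effect is the same bracket as in the noiseless case but with the leading constant doubled, i.e.\ $\|\wv_{t+1}-\xv^*\|_2^2\le(8+4m^*)\big[1+\tfrac{(\alpha^2-2\alpha)(1-\delta)}{e\mx}\big]\|\wv_t-\xv^*\|_2^2+2\alpha^2\|\Amat\ts(\Amat\Amat\ts)\inv\epsilonv\|_2^2$. For the noise term I would use \eqref{eq:AAT_eig} to write $\Amat\ts(\Amat\Amat\ts)\inv=\Amat\ts\Umat\Lambdamat\inv\Umat\ts$ and compute $\|\Amat\ts(\Amat\Amat\ts)\inv\epsilonv\|_2^2=\|\Lambdamat^{-\frac12}\Umat\ts\epsilonv\|_2^2\le\frac{1}{e^{\rm min}}\|\epsilonv\|_2^2$, where $e^{\rm min}$ is the smallest eigenvalue of $\Amat\Amat\ts$, which is positive by invertibility. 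This produces the one-step recursion $\|\wv_{t+1}-\xv^*\|_2^2\le\gamma_2\|\wv_t-\xv^*\|_2^2+\frac{2\alpha^2}{e^{\rm min}}\|\epsilonv\|_2^2$ with $\gamma_2=(8+4m^*)\big[1+\tfrac{(\alpha^2-2\alpha)(1-\delta)}{e\mx}\big]$.

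To finish, I would impose $\gamma_2<1$. This is the same quadratic inequality in $\alpha$ treated in \eqref{eq:cover_1}--\eqref{eq:alpha_emax} with $(4+2m^*)$ replaced by $(8+4m^*)$, so it is equivalent to $|\alpha-1|<\sqrt{1-\frac{e\mx(7+4m^*)}{(1-\delta)(8+4m^*)}}$, i.e.\ the $\alpha$-interval in the theorem statement; this interval is nonempty exactly when the radicand is positive, which rearranges to $\delta<1-\frac{4m^*+7}{4m^*+8}e\mx$, and the side assumption $e\mx<\frac{4m^*+8}{4m^*+7}$ ensures this upper bound lies in $(0,1)$, so the RIP hypothesis $0<\delta_{m^*+K}<1-\frac{4m^*+7}{4m^*+8}e\mx$ is consistent. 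With $0<\gamma_2<1$ fixed, unrolling the recursion from $t=0$ gives $\|\wv_t-\xv^*\|_2^2\le\gamma_2^t\|\wv_0-\xv^*\|_2^2+\frac{2\alpha^2}{e^{\rm min}}\|\epsilonv\|_2^2\sum_{j=0}^{t-1}\gamma_2^j$, and bounding the finite geometric sum by $\frac{1}{1-\gamma_2}$ yields the stated steady-state estimate; in particular the first term decays geometrically and $\{\wv_t\}$ converges to $\xv^*$ up to the residual noise floor.

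The genuinely delicate estimate --- the passage \eqref{eq:theta_t_wt} from $\|\thetav_t-\xv^*\|_2^2$ to $(4+2m^*)\|\wv_t-\xv^*\|_2^2$ --- has already been carried out in Section~\ref{Sec:noiseless}; it rests on the support decomposition via ${\cal I}_+$ and ${\cal J}^{(t)}_+$, the identity $\|\thetav_{t,{\cal J}^{(t)}_+}-\wv_{t,{\cal J}^{(t)}_+}\|_2^2=m^*\lambda_t^2$, the two-sided control $\lambda_t\le\min_{i\in{\cal J}^{(t)}_+}|w_{t,i}|$ and $\lambda_t\le\max_{i\in{\cal J}^{(t)}_+}|w_{t,i}-x^*_i|$ coming from the choice \eqref{eq:lambda_t}, and the counting bound $|{\cal I}_+\backslash{\cal J}^{(t)}_+|\le|{\cal J}^{(t)}_+\backslash{\cal I}_+|$ which uses $m^*\ge K$. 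Granting that machinery, the only genuinely new feature of the noisy case is bookkeeping: the single extra use of Lemma~\ref{le:2norm} doubles every constant (so $\gamma_2=2\gamma_1$ for a common $\alpha$), which tightens both the admissible step-size interval and the RIP window. Hence the main thing to watch is feasibility --- verifying that $1-\frac{4m^*+7}{4m^*+8}e\mx>0$ remains compatible with $0<\delta<1$ --- which is precisely what the hypothesis $e\mx<\frac{4m^*+8}{4m^*+7}$ secures; the noise estimate itself is routine once $\Amat\Amat\ts$ is diagonalized.
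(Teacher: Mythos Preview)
Your proposal is correct and mirrors the paper's own argument essentially step for step: apply Lemma~\ref{le:2norm} to split off the noise term in \eqref{eq:noise1}, reuse the noiseless chain \eqref{eq:wt_lambda}--\eqref{eq:theta_t_wt} on the signal part (picking up the factor of $2$ throughout), bound the noise via the eigendecomposition as $\|\Lambdamat^{-\frac12}\Umat\ts\epsilonv\|_2^2\le\frac{1}{e^{\rm min}}\|\epsilonv\|_2^2$, and then solve the quadratic $\gamma_2<1$ for the $\alpha$-interval and the RIP window. Your treatment is in fact slightly more careful than the paper's, since you explicitly unroll the recursion and note the geometric-sum factor $\frac{1}{1-\gamma_2}$ on the noise floor, which the theorem statement silently drops.
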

It can be observed that both $\delta$ and $\alpha$ have a tighter bound than them in the noiseless case as well as $e\mx$.

\subsection{Noise Estimation}
\label{Sec:noise_estimate}
In the noisy case,
\begin{equation}
\wv_{t+1} 
= \thetav_t - \alpha \Amat\ts (\Amat\Amat\ts)\inv \Amat(\thetav_t-\xv^* )+ \alpha \Amat\ts (\Amat\Amat\ts)\inv \epsilonv. \label{eq:noise_e_gap}
\end{equation}
Left-multiplying $\Amat$ on both sides of \eqref{eq:noise_e_gap},
\begin{equation}
\Amat\wv_{t+1} = \Amat\thetav_t - \alpha  \Amat(\thetav_t-\xv^* )+ \alpha  \epsilonv.
\end{equation}
Consider $\lim_{t\rightarrow\infty}\thetav_t = \xv^*$ and the noise can thus be estimated via
\begin{equation}
\hat{\epsilonv} = \frac{\Amat(\wv_{t+1} - \thetav_t)}{\alpha}. \label{eq:noise_es}
\end{equation}

\section{Relation to Adaptively Iterative Thresholding Algorithms}
\label{Sec:AIT_con}
When the $(\Amat\Amat\ts)\inv$ is not used in (\ref{eq:w_t+1}), GAP will degrade to the adaptively iterative thresholding (AIT) algorithm as investigated in~\cite{Daubechie04IST,WangAIT15,Beck09IST,Figueiredo07MM}. The updating equations become
\begin{eqnarray}
\wv_{t+1} &=& \thetav_t + \alpha \Amat\ts  (\yv - \Amat\thetav_t), \label{eq:w_t+1_AIT}\\
\thetav_t &=& \wv_t \odot \max\left\{1-\frac{\lambda_t}{|\wv_t|},0\right\}. \label{eq:theta_t_AIT}
\end{eqnarray}
Note that the RIP condition derived in~\cite{WangAIT15} is based on $\delta_{3K+1}$ while in our paper, we derive the RIP condition for AIT and GAP based on $\delta_{m^*+K}$.
When $m^* = K$, our condition is based on $\delta_{2K}$, thus looser than the conditions in~\cite{WangAIT15}.
We extend our RIP condition developed for GAP to AIT based on $\delta_{m^*+K}$ below.

\subsection{Noiseless Case}
Following the derivation in Section~\ref{Sec:noiseless}, we have
\begin{eqnarray}
\wv_{t+1} - \xv^* &=& \thetav_t-\xv^* + \alpha \Amat\ts  (\yv - \Amat\thetav_t) \nonumber\\
&=& \thetav_t-\xv^* + \alpha \Amat\ts  (\Amat\xv^*  - \Amat\thetav_t) \nonumber\\
&=& \thetav_t-\xv^* - \alpha \Amat\ts \Amat(\thetav_t - \xv^* ).
\end{eqnarray}
Taking $\ell_2$-norm on both sides, we have
\begin{eqnarray}
\|\wv_{t+1} - \xv^*\|^2_2 &=&  \|\thetav_t-\xv^*\|^2_2 + \alpha^2\|\Amat\ts\Amat (\thetav_t - \xv^*)\|_2^2 \nonumber\\
&&- 2\alpha\|\Amat(\thetav_t - \xv^*)\|_2^2. \label{eq:AITwt+1_x*}
\end{eqnarray}

Let us introduce the following lemma:
\begin{lemma}\label{le:eigenvalue}
	The non-zero eigenvalues of $\Amat\Amat\ts$ and $\Amat\ts\Amat$ are same.
\end{lemma}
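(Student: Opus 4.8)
The plan is to exhibit an explicit bijection between the non-zero eigenpairs of the two Gram matrices. Suppose $\lambda \neq 0$ is an eigenvalue of $\Amat\Amat\ts$ with eigenvector $\uv \in {\mathbb R}^M$, so that $\Amat\Amat\ts \uv = \lambda \uv$. First I would left-multiply this identity by $\Amat\ts$ to obtain $\Amat\ts\Amat(\Amat\ts\uv) = \lambda(\Amat\ts\uv)$, which shows that $\vv \stackrel{\rm def}{=} \Amat\ts\uv \in {\mathbb R}^N$ is an eigenvector of $\Amat\ts\Amat$ with the same eigenvalue $\lambda$, provided $\vv \neq 0$. The key check is that $\vv$ is indeed non-zero: if $\Amat\ts\uv = 0$ then $\Amat\Amat\ts\uv = 0 = \lambda\uv$, forcing $\uv = 0$ since $\lambda \neq 0$, contradicting that $\uv$ is an eigenvector. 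Hence every non-zero eigenvalue of $\Amat\Amat\ts$ is a non-zero eigenvalue of $\Amat\ts\Amat$. Running the symmetric argument — starting from $\Amat\ts\Amat\vv = \lambda\vv$ with $\lambda \neq 0$ and left-multiplying by $\Amat$ — gives the reverse inclusion.

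To make this a statement about eigenvalues \emph{with multiplicities} (which is what is implicitly needed when one later invokes $e^{\rm min}$, $e^{\rm max}$ for both matrices), I would observe that the map $\uv \mapsto \Amat\ts\uv$ is linear and, restricted to the $\lambda$-eigenspace $E_\lambda(\Amat\Amat\ts)$, is injective by the non-vanishing argument above, with image inside $E_\lambda(\Amat\ts\Amat)$; the symmetric map $\vv \mapsto \Amat\vv$ gives an injection the other way. Therefore the two eigenspaces have equal dimension for every $\lambda \neq 0$. An even cleaner route, if preferred, is to cite the singular value decomposition $\Amat = \Umat\Sigmamat\Vmat\ts$: then $\Amat\Amat\ts = \Umat\Sigmamat\Sigmamat\ts\Umat\ts$ and $\Amat\ts\Amat = \Vmat\Sigmamat\ts\Sigmamat\Vmat\ts$, and both $\Sigmamat\Sigmamat\ts$ and $\Sigmamat\ts\Sigmamat$ are diagonal with the same non-zero diagonal entries $\sigma_i^2$.

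There is essentially no obstacle here; the only point requiring a moment's care is the non-vanishing of the transported eigenvector, which is exactly where the hypothesis $\lambda \neq 0$ is used, and the bookkeeping of multiplicities if one wants the full spectral statement rather than just set equality. In the context of the paper this lemma is used to replace $\|\Amat\ts\Amat(\thetav_t - \xv^*)\|_2^2$-type quantities and to let the AIT analysis reuse the eigenvalue bounds $e^{\rm min}, e^{\rm max}$ of $\Amat\Amat\ts$ when only $\Amat\ts\Amat$ naturally appears, so set equality of non-zero eigenvalues — together with the trivial observation that the relevant vectors lie in the row space of $\Amat$ — is all that is actually needed.
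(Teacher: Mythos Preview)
Your proposal is correct. Your primary argument --- transporting eigenvectors via $\uv \mapsto \Amat\ts\uv$ and $\vv \mapsto \Amat\vv$, checking non-vanishing when $\lambda \neq 0$, and matching dimensions of eigenspaces --- is more elementary than what the paper does: the paper simply invokes the singular value decomposition $\Amat = \Pmat\Sigmamat\Qmat\ts$ and reads off $\Amat\Amat\ts = \Pmat(\Sigmamat\Sigmamat\ts)\Pmat\ts$ and $\Amat\ts\Amat = \Qmat(\Sigmamat\ts\Sigmamat)\Qmat\ts$, which is exactly the ``cleaner route'' you mention at the end. Your eigenvector-transport argument has the virtue of being self-contained and of making explicit why the hypothesis $\lambda \neq 0$ is essential, and it gives the multiplicity statement without appealing to the existence of SVD; the paper's SVD route is shorter but presupposes that machinery. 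Either is entirely adequate for the use made of the lemma downstream.
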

\begin{proof}
	The singular value decomposition of $\Amat$ gives:
	\begin{eqnarray}
	\Amat &=& \Pmat \Sigmamat \Qmat\ts,
	\end{eqnarray} 
	with $\Pmat\in{\mathbb R}^{M\times M}$, $\Sigmamat\in {\mathbb R}^{M\times N}$ and $\Qmat\in{\mathbb R}^{N\times N}$.
	Following this:
	\begin{eqnarray}
	\Amat\Amat\ts &=& \Pmat (\Sigmamat \Sigmamat\ts) \Pmat\ts, \\
	\Amat\ts \Amat &=& \Qmat (\Sigmamat\ts\Sigmamat) \Qmat\ts.
	\end{eqnarray}
	Therefore, the eigenvalues of $\Amat\Amat\ts$ and $\Amat\ts\Amat$ are the same.
\end{proof}

Recall that $e^{\rm max}$ is the maximum eigenvalue of $\Amat\Amat\ts$ (thus $\Amat\ts\Amat$), from (\ref{eq:AITwt+1_x*}) we have
\begin{equation}
\|\wv_{t+1} - \xv^*\|^2_2 \le  \left[1+\alpha^2 (e^{\rm max})^2 -2\alpha(1-\delta)\right]\|\thetav_t-\xv^*\|^2_2.
\end{equation}
Integrating with (\ref{eq:theta_t_wt}), we have
\begin{eqnarray}
\|\wv_{t+1} - \xv^*\|^2_2 &\le & \left[1+\alpha^2 (e^{\rm max})^2 -2\alpha(1-\delta)\right](4+2m^*)\nonumber\\
&& \times \|\wv_t-\xv^*\|^2_2. \label{eq:emax_2}
\end{eqnarray}
On the other hand, recall (\ref{eq:AAT_eig}) 
\begin{align}
&\|\Amat\ts\Amat (\thetav_t - \xv^*)\|_2^2 = (\thetav_t - \xv^*)\ts \Amat\ts\Amat\Amat\ts\Amat (\thetav_t - \xv^*) \nonumber\\
&\qquad= (\thetav_t - \xv^*)\ts \Amat\ts\Umat\Lambdamat\Umat\ts\Amat (\thetav_t - \xv^*) \\
&\qquad= \|\Lambdamat^{\frac{1}{2}}\Umat\ts\Amat (\thetav_t - \xv^*)\|^2_2\nonumber\\
&\qquad \le e\mx \|\Umat\ts\Amat (\thetav_t - \xv^*)\|^2_2 \\
&\qquad \le e\mx (1+\delta)\|\thetav_t - \xv^*\|^2_2.
\end{align}
Combing with (\ref{eq:theta_t_wt}), we have
\begin{eqnarray}
\|\wv_{t+1} - \xv^*\|^2_2 &\le & \left[1+\alpha^2 (1+\delta)e^{\rm max} -2\alpha(1-\delta)\right]\nonumber\\
&& \times(4+2m^*) \|\wv_t-\xv^*\|^2_2. \label{eq:emax_ait}
\end{eqnarray}
Along with (\ref{eq:emax_2}), for convergence, we need
either
\begin{align}
[1+\alpha^2 (e^{\rm max})^2-2\alpha(1-\delta)](4+2m^*)  <1, \label{eq:c_1}
\end{align}
or
\begin{align}
[1+\alpha^2 e^{\rm max}(1+\delta)-2\alpha(1-\delta)](4+2m^*)  <1. \label{eq:c_2}
\end{align}

Separately consider these two cases:
\begin{itemize}
	\item[a)]
	Equation (\ref{eq:c_1}) can be simplified to
	{\footnotesize
		\begin{equation}
		\left(\alpha-\frac{(1-\delta)}{(e^{\rm max})^2}\right)^2 < \frac{(2m^*+4)(1-\delta)^2 -(2m^*+3)(e\mx)^2}{(2m^*+4)(e^{\rm max})^4}.
		\end{equation}}
	Following this
	\begin{align}
	& \frac{1-\delta}{(e\mx)^2}\left[1-\sqrt{1-\frac{(2m^*+3)(e\mx)^2}{(2m^*+4)(1-\delta)^2}}\right]<\alpha \nonumber \\
	&~~  < \frac{1-\delta}{(e\mx)^2}\left[1+\sqrt{1-\frac{(2m^*+3)(e\mx)^2}{(2m^*+4)(1-\delta)^2}}\right], \label{eq:alpha_ait_noiseless}
	\end{align}
	and we need
	\begin{eqnarray}
	1-\frac{(2m^*+3)(e\mx)^2}{(2m^*+4)(1-\delta)^2}&>& 0, \nonumber
	\end{eqnarray}
	which is
	\begin{eqnarray}
	0<\delta < 1-e\mx \frac{\sqrt{2m^*+3}}{\sqrt{2m^*+4}},
	\end{eqnarray}
	where we assume $e^{\rm max} <\frac{\sqrt{2m^*+4}}{\sqrt{2m^*+3}}$.
	
	\item[b)]
	Equation (\ref{eq:c_2}) can be simplified to
	\begin{align}
	&\left(\alpha-\frac{(1-\delta)}{e^{\rm max}(1+\delta)}\right)^2 \nonumber\\
	&< \frac{(2m^*+4)(1-\delta)^2 - (2m^*+3)(1+\delta)e\mx}{(4+2m^*)(e^{\rm max})^2(1+\delta)^2}.
	\end{align}
	Following this
	\begin{align}
	& \frac{(1-\delta)}{e^{\rm max}(1+\delta)}\left[1-\sqrt{1-\frac{(2m^*+3)(1+\delta)e\mx}{(2m^*+4)(1-\delta)^2}}\right]<\alpha \nonumber \\
	&~~  < \frac{(1-\delta)}{e^{\rm max}(1+\delta)}\left[1+\sqrt{1-\frac{(2m^*+3)(1+\delta)e\mx}{(2m^*+4)(1-\delta)^2}}\right], \label{eq:alpha_ait_noiseless_2}
	\end{align}
	and we need
	\begin{eqnarray}
	1-\frac{(2m^*+3)(1+\delta)e\mx}{(2m^*+4)(1-\delta)^2}>0,
	\end{eqnarray}
	which is
	\begin{align}
	0&< \delta < \frac{2+c_2e\mx - \sqrt{c_2^2 (e\mx)^2 + 8 c_2 e\mx}}{2}, \label{eq:AIT_c_2_delta}\\
	c_2 &= \frac{2m^*+3}{2m^*+4}.
	\end{align}
	For (\ref{eq:AIT_c_2_delta}), we need
	\begin{eqnarray}
	0<\frac{2+c_2e\mx - \sqrt{c_2^2 (e\mx)^2 + 8 c_2 e\mx}}{2} <1,
	\end{eqnarray}
	which is
	\begin{equation}
	e\mx < \frac{1}{c_2} = \frac{2m^*+4}{2m^*+3}.\label{eq:ait_emax}
	\end{equation}
	In this case, we only consider $e\mx >(1+\delta)$ (otherwise, please refer to case a)), together with (\ref{eq:ait_emax}), 
	\begin{eqnarray}
	0<\delta<\frac{1}{2m^*+3}.
	\end{eqnarray}
	The above derivation leads to the following theorem for AIT in the noiseless case:
	\begin{theorem}	
		\label{thm:AIT_noiseless}
		Let $\{\wv_t\}_{t=1}^{\infty}$ be a sequence generated by the AIT presented in (\ref{eq:w_t+1_AIT})-(\ref{eq:theta_t_AIT}) for $\yv = \Amat\xv$, with $\xv^*$ being $K$-sparse signal satisfying $\yv = \Amat\xv^*$. Let $e^{\rm max}$ be maximum eigenvalue of $\Amat\Amat\ts$.
		$m^*\ge K$ is the sparsity of $\{\thetav_t\}$ generated by (\ref{eq:theta_t_AIT}).
		
		If the sensing matrix $\Amat$ satisfies the RIP
		\begin{equation} \label{eq:RIP_AIT_no_1}
		0<\delta_{m^*+K} < 1-e\mx \frac{\sqrt{2m^*+3}}{\sqrt{2m^*+4}}, 
		\end{equation}
		with $e^{\rm max} <\frac{\sqrt{2m^*+4}}{\sqrt{2m^*+3}}$ and the step size $\alpha$ 
		\begin{align} \label{eq:alpha_ait_no_1}
		& \frac{(1-\delta)}{(e\mx)^2}\left[1-\sqrt{1-\frac{(2m^*+3)(e\mx)^2}{(2m^*+4)(1-\delta)^2}}\right]<\alpha \nonumber \\
		&~~~  < \frac{(1-\delta)}{(e\mx)^2}\left[1+\sqrt{1-\frac{(2m^*+3)(e\mx)^2}{(2m^*+4)(1-\delta)^2}}\right],
		\end{align}
		or
		$\Amat$ satisfies the RIP
		\begin{equation} \label{eq:RIP_AIT_no_2}
		0<\delta_{m^*+K} < \frac{1}{2m^*+3},
		\end{equation}
		and the step size $\alpha$ 
		\begin{align}\label{eq:alpha_ait_no_2}
		& \frac{(1-\delta)}{e^{\rm max}(1+\delta)}\left[1-\sqrt{1-\frac{(2m^*+3)(1+\delta)e\mx}{(2m^*+4)(1-\delta)^2}}\right]<\alpha \nonumber \\
		&~~  < \frac{(1-\delta)}{e^{\rm max}(1+\delta)}\left[1+\sqrt{1-\frac{(2m^*+3)(1+\delta)e\mx}{(2m^*+4)(1-\delta)^2}}\right], 
		\end{align}
		with $e^{\rm max} <\frac{{2m^*+4}}{{2m^*+3}}$,
		then
		\begin{equation}
		\|\wv_{t} - \xv^*\|^2_2 \le \gamma_{\rm no}^t\|\wv_{0} - \xv^*\|_2^2,
		\end{equation}
		where 
		\begin{equation} \label{eq:r_no_r3}
		\gamma_{\rm no} = \gamma_3 = \left[1+\alpha^2 (e^{\rm max})^2 -2\alpha(1-\delta)\right](4+2m^*)<1,
		\end{equation}
		if (\ref{eq:RIP_AIT_no_1})-(\ref{eq:alpha_ait_no_1}) are satisfied,
		and
		{\small \begin{equation}\label{eq:r_no_r4}
			\gamma_{\rm no} = \gamma_4 = \left[1+\alpha^2 e^{\rm max}(1+\delta)-2\alpha(1-\delta)\right](4+2m^*)<1,
			\end{equation}}
		if (\ref{eq:RIP_AIT_no_2})-(\ref{eq:alpha_ait_no_2}) are satisfied 
		with $\delta = \delta_{m^* + K}$. $\{\wv_t\}$ converges to the true signal $\xv^*$.
	\end{theorem}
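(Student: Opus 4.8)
The plan is to retrace the GAP argument of Section~\ref{Sec:noiseless}, but with the factor $\Amat\ts(\Amat\Amat\ts)\inv$ replaced by the bare $\Amat\ts$, so that the ``projection'' term $\alpha\Amat\ts(\yv-\Amat\thetav_t)$ no longer annihilates nicely. First I would write $\wv_{t+1}-\xv^* = (\thetav_t-\xv^*) - \alpha\Amat\ts\Amat(\thetav_t-\xv^*)$ in the noiseless case, expand the squared $\ell_2$-norm to get the three terms in \eqref{eq:AITwt+1_x*}, and then bound the two nontrivial terms. For the cross term $-2\alpha\|\Amat(\thetav_t-\xv^*)\|_2^2$ I use the lower RIP bound $\|\Amat(\thetav_t-\xv^*)\|_2^2 \ge (1-\delta)\|\thetav_t-\xv^*\|_2^2$ (valid because $\thetav_t-\xv^*$ has at most $m^*+K$ nonzeros, so $\delta=\delta_{m^*+K}$). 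For the quadratic term $\alpha^2\|\Amat\ts\Amat(\thetav_t-\xv^*)\|_2^2$ there are two ways to proceed, which is what produces the two alternative cases in the statement: either bound $\|\Amat\ts\Amat \vv\|_2 \le e\mx\|\vv\|_2$ crudely using Lemma~\ref{le:eigenvalue} (the operator norm of $\Amat\ts\Amat$ equals the largest eigenvalue of $\Amat\Amat\ts$), giving the $\alpha^2(e\mx)^2$ coefficient; or insert the eigendecomposition $\Amat\Amat\ts=\Umat\Lambdamat\Umat\ts$ as in the displayed computation, pull out $\|\Lambdamat^{1/2}\Umat\ts\Amat\vv\|_2^2 \le e\mx\|\Umat\ts\Amat\vv\|_2^2 \le e\mx(1+\delta)\|\vv\|_2^2$ via Lemma~\ref{Lemma:URIP}, giving the sharper $\alpha^2 e\mx(1+\delta)$ coefficient. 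This yields \eqref{eq:emax_2} and \eqref{eq:emax_ait} respectively.

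Next I would invoke \eqref{eq:theta_t_wt}, i.e. $\|\thetav_t-\xv^*\|_2^2 \le (4+2m^*)\|\wv_t-\xv^*\|_2^2$, which was already established for the shrinkage step \eqref{eq:theta_t_AIT} in Section~\ref{Sec:noiseless} and carries over verbatim since that step is identical in AIT. Composing gives the one-step contraction $\|\wv_{t+1}-\xv^*\|_2^2 \le \gamma\|\wv_t-\xv^*\|_2^2$ with $\gamma=\gamma_3$ or $\gamma=\gamma_4$ as in \eqref{eq:r_no_r3}--\eqref{eq:r_no_r4}. Iterating $t$ times gives the claimed bound $\|\wv_t-\xv^*\|_2^2 \le \gamma_{\rm no}^t\|\wv_0-\xv^*\|_2^2$.

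It then remains to show that the contraction factor can actually be made $<1$ by a suitable choice of $\alpha$, and to translate that into a condition on $\delta$ and on $e\mx$. Treating $\gamma_3<1$ (resp.\ $\gamma_4<1$) as a quadratic inequality in $\alpha$, I complete the square to get the interval for $\alpha$ in \eqref{eq:alpha_ait_noiseless} (resp.\ \eqref{eq:alpha_ait_noiseless_2}); the interval is nonempty precisely when the discriminant is positive, which is the condition $1 - \tfrac{(2m^*+3)(e\mx)^2}{(2m^*+4)(1-\delta)^2}>0$ (resp.\ the analogous inequality with $(e\mx)^2$ replaced by $(1+\delta)e\mx$). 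Solving the first for $\delta$ gives \eqref{eq:RIP_AIT_no_1} under the standing assumption $e\mx<\sqrt{2m^*+4}/\sqrt{2m^*+3}$; solving the second — which is a quadratic in $\delta$ — gives \eqref{eq:AIT_c_2_delta}, and requiring its right-hand side to lie in $(0,1)$ forces $e\mx<(2m^*+4)/(2m^*+3)$, and restricting to the regime $e\mx>1+\delta$ where case b) is the relevant one sharpens this to $\delta<1/(2m^*+3)$, i.e. \eqref{eq:RIP_AIT_no_2}.

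The main obstacle is the bookkeeping in the last paragraph: the quadratic-in-$\delta$ inequality \eqref{eq:AIT_c_2_delta} and the self-consistency checks needed to guarantee that the derived bounds on $\delta$, on $\alpha$, and on $e\mx$ are simultaneously satisfiable (in particular, reconciling the two cases via the dichotomy $e\mx \lessgtr 1+\delta$). Everything else is a routine adaptation of the GAP derivation, since the only structural change is that $\Amat\ts(\Amat\Amat\ts)\inv\Amat$, an orthogonal projection in the GAP analysis, is replaced by $\Amat\ts\Amat$, whose spectral bounds come directly from Lemmas~\ref{Lemma:URIP} and~\ref{le:eigenvalue}.
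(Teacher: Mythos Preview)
Your proposal is correct and follows essentially the same route as the paper: expand $\|\wv_{t+1}-\xv^*\|_2^2$ into the three terms of \eqref{eq:AITwt+1_x*}, bound $\|\Amat\ts\Amat(\thetav_t-\xv^*)\|_2^2$ in the two alternative ways (operator norm via Lemma~\ref{le:eigenvalue}, or eigendecomposition plus Lemma~\ref{Lemma:URIP}), apply the thresholding bound \eqref{eq:theta_t_wt}, and then solve the resulting quadratic-in-$\alpha$ inequalities to extract the ranges for $\alpha$ and the RIP/eigenvalue conditions. The bookkeeping you flag in the last paragraph (the quadratic in $\delta$ for case~b and the dichotomy $e\mx\lessgtr 1+\delta$) is exactly what the paper does in deriving \eqref{eq:AIT_c_2_delta}--\eqref{eq:ait_emax}.
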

\end{itemize}
Note that the bounds of RIP-$\delta$ in~(\ref{eq:RIP_AIT_no_2}) is looser than those in~(\ref{eq:RIP_AIT_no_1}).

\subsection{Noisy Case}
In the noisy case,
\begin{eqnarray}
\wv_{t+1} - \xv^* 
= \thetav_t-\xv^* - \alpha \Amat\ts \Amat(\thetav_t - \xv^* ) + \alpha \Amat\ts\epsilonv.
\end{eqnarray}
Taking $\ell_2$-norm on both sides and using Lemma~\ref{le:2norm},
\begin{align}
\|\wv_{t+1} - \xv^* \|_2^2 &\le 2\|\thetav_t-\xv^*\|_2^2 + 2\alpha^2\|\Amat\ts\Amat (\thetav_t - \xv^*)\|_2^2 \nonumber\\
&~~- 4\alpha\|\Amat(\thetav_t - \xv^*)\|_2^2 +2\alpha^2\|\Amat\ts \epsilonv\|_2^2.
\end{align}
For the noise term, using (\ref{eq:AAT_eig}) 
\begin{align}
2\alpha^2\|\Amat\ts \epsilonv\|_2^2 &= 2\alpha^2\epsilonv\ts \Amat\Amat\ts \epsilonv=2\alpha^2\epsilonv\ts \Umat\Lambdamat\Umat\ts \epsilonv \\
&= 2\alpha^2\|\Lambdamat^{\frac{1}{2}}\Umat\ts \epsilonv\|_2^2 \\
&\le 2\alpha^2 e\mx \|\Umat\ts \epsilonv\|_2^2 = 2\alpha^2 e\mx\|\epsilonv\|_2^2.
\end{align}
Without listing all the details, we provide the convergence of AIT under the noisy case.
\begin{theorem}	
	\label{thm:AIT_noisy}
	Let $\{\wv_t\}_{t=1}^{\infty}$ be a sequence generated by the AIT presented in (\ref{eq:w_t+1_AIT})-(\ref{eq:theta_t_AIT}) for $\yv = \Amat\xv + \epsilonv$, with $\xv^*$ being $K$-sparse signal satisfying $\yv = \Amat\xv^*$. Let $e^{\rm max}$ be maximum eigenvalue of $\Amat\Amat\ts$.
	$m^*\ge K$ is the sparsity of $\{\thetav_t\}$ generated by (\ref{eq:theta_t_AIT}).
	
	If the sensing matrix $\Amat$ satisfies the RIP
	\begin{equation} \label{eq:AIT_delta_noise_1}
	0<\delta_{m^*+K} < 1-e\mx \frac{\sqrt{4m^*+7}}{\sqrt{4m^*+8}},
	\end{equation}
	with $e^{\rm max} <\frac{\sqrt{4m^*+8}}{\sqrt{4m^*+7}}$,
	and the step size $\alpha$ 
	\begin{align} \label{eq:AIT_alpha_noise_1}
	& \frac{(1-\delta)}{(e\mx)^2}\left[1-\sqrt{1-\frac{(4m^*+7)(e\mx)^2}{(4m^*+8)(1-\delta)^2}}\right]<\alpha \nonumber \\
	&~~~  < \frac{(1-\delta)}{(e\mx)^2}\left[1+\sqrt{1-\frac{(4m^*+7)(e\mx)^2}{(4m^*+8)(1-\delta)^2}}\right],
	\end{align}	
	or the sensing matrix $\Amat$ satisfies the RIP
	\begin{equation}\label{eq:AIT_delta_noise_2}
	0<\delta_{m^*+K} < \frac{1}{4m^*+7},
	\end{equation}
	and the step size $\alpha$ 
	\begin{align} \label{eq:AIT_alpha_noise_2}
	& \frac{(1-\delta)}{e^{\rm max}(1+\delta)}\left[1-\sqrt{1-\frac{(4m^*+7)(1+\delta)e\mx}{(4m^*+8)(1-\delta)^2}}\right]<\alpha \nonumber \\
	&~~  < \frac{(1-\delta)}{e^{\rm max}(1+\delta)}\left[1+\sqrt{1-\frac{(4m^*+7)(1+\delta)e\mx}{(4m^*+8)(1-\delta)^2}}\right], 
	\end{align}	
	with $e^{\rm max} <\frac{{4m^*+8}}{{4m^*+7}}$,
	then
	\begin{equation}
	\|\wv_{t} - \xv^*\|^2_2 \le \gamma_{\rm ns}^t\|\wv_{0} - \xv^*\|_2^2 + 2\alpha^2 e\mx\|\epsilonv\|_2^2,
	\end{equation}
	where 
	\begin{equation}
	\gamma_{\rm ns} = \gamma_5 = \left[1+\alpha^2 (e^{\rm max})^2 -2\alpha(1-\delta)\right](8+4m^*)<1,
	\end{equation}
	if (\ref{eq:AIT_delta_noise_1})-(\ref{eq:AIT_alpha_noise_1}) are satisfied, 
	and
	{\small \begin{equation}
		\gamma_{\rm ns} = \gamma_6 = \left[1+\alpha^2 e^{\rm max}(1+\delta)-2\alpha(1-\delta)\right](8+4m^*)<1,
		\end{equation}}
	if (\ref{eq:AIT_delta_noise_2})-(\ref{eq:AIT_alpha_noise_2}) are satisfied,
	with $\delta = \delta_{m^* + K}$.  $\{\wv_t\}$ converges to the true signal $\xv^*$ until reaching some error bound.
\end{theorem}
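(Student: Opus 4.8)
The plan is to mirror the noiseless AIT argument of Theorem~\ref{thm:AIT_noiseless}, inserting one extra application of Lemma~\ref{le:2norm} to peel off the noise contribution, exactly as was done to pass from the noiseless GAP estimate to Theorem~\ref{thm:anytime_noise}. First I would start from the AIT recursion in the noisy case, $\wv_{t+1} - \xv^* = \thetav_t-\xv^* - \alpha \Amat\ts \Amat(\thetav_t - \xv^*) + \alpha \Amat\ts\epsilonv$, take the squared $\ell_2$-norm, and apply Lemma~\ref{le:2norm} to split off the term $2\alpha^2\|\Amat\ts\epsilonv\|_2^2$, which is bounded by $2\alpha^2 e\mx\|\epsilonv\|_2^2$ through the eigendecomposition $\Amat\Amat\ts = \Umat\Lambdamat\Umat\ts$ (the computation is already displayed immediately before the theorem). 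The remaining ``signal'' part is treated in two ways, according to which quadratic form in $\alpha$ one keeps for $\|\Amat\ts\Amat(\thetav_t-\xv^*)\|_2^2$: either bound it by $(e\mx)^2\|\thetav_t-\xv^*\|_2^2$ using $\|\Amat\ts\Amat\vv\|_2\le e\mx\|\vv\|_2$, or by $(1+\delta)e\mx\|\thetav_t-\xv^*\|_2^2$ via Lemma~\ref{le:eigenvalue} together with the RIP upper bound; both estimates already appear in the noiseless AIT subsection. Combining with the RIP lower bound $\|\Amat(\thetav_t-\xv^*)\|_2^2 \ge (1-\delta)\|\thetav_t-\xv^*\|_2^2$ and the already-established inequality~\eqref{eq:theta_t_wt}, $\|\thetav_t - \xv^*\|_2^2 \le (4+2m^*)\|\wv_t - \xv^*\|_2^2$, produces the one-step recursion $\|\wv_{t+1}-\xv^*\|_2^2 \le \gamma_{\rm ns}\|\wv_t-\xv^*\|_2^2 + 2\alpha^2 e\mx\|\epsilonv\|_2^2$, with $\gamma_{\rm ns}=\gamma_5$ in the first case and $\gamma_{\rm ns}=\gamma_6$ in the second; the factor $(8+4m^*)=2(4+2m^*)$ records the single use of Lemma~\ref{le:2norm}.

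Next I would impose $\gamma_{\rm ns}<1$. In the first case this reads $[1+\alpha^2(e\mx)^2-2\alpha(1-\delta)](8+4m^*)<1$, a quadratic inequality in $\alpha$; completing the square gives $\bigl(\alpha - (1-\delta)/(e\mx)^2\bigr)^2 < \bigl[(4m^*+8)(1-\delta)^2-(4m^*+7)(e\mx)^2\bigr]\big/\bigl[(4m^*+8)(e\mx)^4\bigr]$, i.e.\ the interval~\eqref{eq:AIT_alpha_noise_1}. Nonemptiness forces the right-hand side to be positive, which is $1-(4m^*+7)(e\mx)^2\big/\bigl[(4m^*+8)(1-\delta)^2\bigr]>0$, equivalently $\delta < 1-e\mx\sqrt{(4m^*+7)/(4m^*+8)}$, the bound~\eqref{eq:AIT_delta_noise_1}; positivity of that bound in turn forces $e\mx<\sqrt{(4m^*+8)/(4m^*+7)}$. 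The second case is identical in form with $(e\mx)^2$ replaced by $(1+\delta)e\mx$, yielding the interval~\eqref{eq:AIT_alpha_noise_2}; positivity of the discriminant is now a quadratic inequality in $\delta$ whose relevant root, after imposing $e\mx>1+\delta$ and $e\mx<(4m^*+8)/(4m^*+7)$, simplifies to $\delta<1/(4m^*+7)$, the bound~\eqref{eq:AIT_delta_noise_2}. These are exactly the thresholds of the noiseless Theorem~\ref{thm:AIT_noiseless} with $2m^*+3,\,2m^*+4$ replaced by $4m^*+7,\,4m^*+8$.

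Finally, with $\gamma_{\rm ns}<1$ fixed, I would unroll the recursion by induction: $\|\wv_t-\xv^*\|_2^2 \le \gamma_{\rm ns}^t\|\wv_0-\xv^*\|_2^2 + 2\alpha^2 e\mx\|\epsilonv\|_2^2\sum_{j=0}^{t-1}\gamma_{\rm ns}^j$, and since $\gamma_{\rm ns}<1$ the geometric sum is at most $1/(1-\gamma_{\rm ns})$, giving the stated noise floor. I expect the only real work to be bookkeeping: carrying the factor of $2$ correctly through Lemma~\ref{le:2norm}, so the $(4+2m^*)$ of the noiseless case becomes $(8+4m^*)$, and checking that the two alternative bounds on $\|\Amat\ts\Amat(\thetav_t-\xv^*)\|_2^2$ each lead to an admissible $(\delta,\alpha)$ range; no genuinely new estimate is needed beyond Lemmas~\ref{le:2norm}--\ref{le:eigenvalue} and inequality~\eqref{eq:theta_t_wt}. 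The main obstacle is verifying that the $\delta$-range in the second case stays inside $(0,1)$, which is precisely what forces the eigenvalue restriction $e\mx<(4m^*+8)/(4m^*+7)$.
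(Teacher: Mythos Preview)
Your proposal is correct and follows exactly the approach the paper sketches: the paper writes out the noisy AIT recursion, applies Lemma~\ref{le:2norm} to split off $2\alpha^2\|\Amat\ts\epsilonv\|_2^2$, bounds that term by $2\alpha^2 e\mx\|\epsilonv\|_2^2$ via the eigendecomposition, and then simply says ``Without listing all the details'' before stating the theorem---the details you supply (the two alternative bounds on $\|\Amat\ts\Amat(\thetav_t-\xv^*)\|_2^2$, combining with \eqref{eq:theta_t_wt}, and the quadratic-in-$\alpha$ analysis with $2m^*+3,2m^*+4$ replaced by $4m^*+7,4m^*+8$) are precisely what the paper omits. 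One small remark: unrolling the recursion gives a noise floor $2\alpha^2 e\mx\|\epsilonv\|_2^2/(1-\gamma_{\rm ns})$, not the bare $2\alpha^2 e\mx\|\epsilonv\|_2^2$ as stated in the theorem (the same looseness appears in Theorem~\ref{thm:anytime_noise}), so your bound is the rigorous one.
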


\subsection{Relation to the Results in~\cite{WangAIT15}}
Wang {\em et.al}~\cite{WangAIT15} derived the RIP condition of AIT based on $\delta_{3K +1}$ and the condition is 
\begin{eqnarray}
0<\delta_{3K+1}< \frac{1}{\sqrt{2K +4}}.
\end{eqnarray}
Our results, by contrast, are based on $\delta_{m^* + K}$, with conditions in (\ref{eq:RIP_AIT_no_1}) or (\ref{eq:RIP_AIT_no_2}).
When $m^* = K$, our results are based on $\delta_{2K}$.

Our results in (\ref{eq:r_no_r3}) and (\ref{eq:r_no_r4}) imply that the step-size $\alpha$ that leads to the fastest convergence of AIT is not always at $\alpha  =1$,  which is obtained in~\cite{WangAIT15}.
In our case, the optimal step-sizes that  lead to fastest convergence of AIT  depend on the RIP condition (and/or the maximum eigenvalue) of the sensing matrix. 
On the other hand, our convergence results require conditions on the eigenvalues of sensing matrix as well as their relationship with RIP, while the results in~\cite{WangAIT15} only require the RIP.
In addition, the convergence rate investigated in our paper is for $\{\wv_t\}$, while the results in~\cite{WangAIT15} are derived for $\{\thetav_t\}$.

Regarding the computational cost, if $\Amat\ts(\Amat\Amat\ts)^{-1}$ is precomputed and saved\footnote{Since $\Amat\Amat\ts$ is usually a sparse matrix, there are sparse LU or Cholesky factorization for $\Amat\Amat\ts$, and therefore, if the factorization is pre-computed and stored, $\Amat\ts(\Amat\Amat\ts)^{-1} \vv$ can be easily computed in each iteration, with $\vv$ denoting a vector.}, the computational workload for GAP is the same as AIT, \ie, ${\cal O}(MN)$~\cite{WangAIT15}.
For the comparison of AIT with other algorithms, please refer to~\cite{WangAIT15}. On the other hand, in a lot of real compressive imaging systems, $\Amat\Amat\ts$ is a diagonal matrix~\cite{Patrick13OE,Yuan14CVPR,Yuan15JSTSP}. For example, $\Amat\Amat\ts=\Imat$ if the permuted Hadamard matrix is used in the single-pixel camera~\cite{Duarte08SPM} and the lensless compressive camera~\cite{Huang13ICIP,Jiang14APSIPA,Yuan15Lensless}.
In the coded aperture compressive hypespectral imaging~\cite{Gehm07,Yuan15JSTSP} and video compressive imaging systems~\cite{Patrick13OE,Yuan14CVPR}, $\Amat\Amat\ts$ is a diagonal matrix and thus very easy to compute the inverse.

\section{Convergence Rate Comparison Between GAP and AIT}
\label{Sec:comp_con}
We now compare the convergence rates of GAP and AIT.
In all the above theorems, each one has a convergence rate $\{\gamma_i\}_{i=1}^6$.
Since $0<\gamma_i <1$, a smaller $\gamma_i$ will lead to faster convergence.
In the following, we compare $\gamma_i$ at different cases.
For each case, there is one smallest $\gamma_i^*$ that leads to fastest convergence, \ie,
$\gamma_i^*$  is chosen to be the minimum of the $\gamma_i$ in the error bound.
\begin{itemize}
	\item For GAP, $\alpha = 1$ leads to the smallest $\{\gamma_i\}_{i=1}^2$
	\begin{eqnarray}
	\gamma_1^* &\stackrel{\alpha = 1}{=}& (2m^* +4) \frac{[e\mx -(1-\delta)]}{e\mx}, \\
	\gamma_2^* &\stackrel{\alpha = 1}{=}& (4m^* +8) \frac{[e\mx -(1-\delta)]}{e\mx}.
	\end{eqnarray}

	\item For AIT, we have different values below:
	\begin{align}
	\gamma_3^* &\stackrel{\alpha = \frac{(1-\delta)}{(e\mx)^2}}{=} (2m^* +4) \frac{[(e\mx)^2 -(1 -\delta)^2]}{(e\mx)^2}, \\
	\gamma_4^* &\stackrel{\alpha = \frac{(1-\delta)}{e\mx (1+\delta)}}{=} (2m^* +4) \frac{[e\mx(1+\delta)-(1-\delta)^2]}{e\mx(1+\delta)}, \\
	\gamma_5^* &\stackrel{\alpha = \frac{(1-\delta)}{(e\mx)^2}}{=} (4m^* +8) \frac{[(e\mx)^2 -(1 -\delta)^2]}{(e\mx)^2}, \\
	\gamma_6^* &\stackrel{\alpha = \frac{(1-\delta)}{e\mx (1+\delta)}}{=} (4m^* +8) \frac{[e\mx(1+\delta)-(1-\delta)^2]}{e\mx(1+\delta)}.
	\end{align}
\end{itemize}

We now compare GAP and AIT under the fastest convergence.
Since $\delta\in (0,1)$, and it is easy\footnote{This can be proved as follows, 
	$(1-\delta)\|\xv\|_2^2<\|\Amat\xv\|_2^2 = \xv\ts\Amat\ts\Amat\xv = \langle \xv, \Amat\ts\Amat\xv \rangle \le e\mx\|\xv\|_2^2$.} to show that $e\mx>(1-\delta)$,
\begin{align}
& \quad(1-\delta)^2<(1-\delta)e\mx \nonumber\\
& \Rightarrow  \frac{[(e\mx)^2 -e\mx(1-\delta)]}{(e\mx)^2} < \frac{[(e\mx)^2 -(1 -\delta)^2]}{(e\mx)^2}, \\
& \quad(1-\delta)^2<(1-\delta)(1+\delta) \nonumber\\
& \Rightarrow  \frac{[(e\mx)(1+\delta) -(1+\delta)(1-\delta)]}{(e\mx)(1+\delta)} \nonumber\\
&\qquad\qquad\qquad< \frac{[e\mx(1+\delta)-(1-\delta)^2]}{e\mx(1+\delta)}.
\end{align}
Along with $\{\gamma_i^*\}_{i=1}^6$ above, we have 
\begin{equation}
\gamma_1^* < \gamma_4^*, \quad
\gamma_2^* < \gamma_6^*,
\end{equation}
and
\begin{equation}
\gamma_1^* < \gamma_3^*, \quad
\gamma_2^* < \gamma_5^*.
\end{equation}
Therefore, the convergence rate of GAP is faster than that of AIT.

\section{Simulation Results}
\label{Sec:Sim}
A set of simulation experiments are conducted in this section to demonstrate the validity of the theoretical results. Moreover, we compare the convergence of GAP and AIT under the same condition.

\subsection{Experimental Setup}
In the following experiments, we set $M = 300, N = 512$.
$K = 20$ is used in the true sparse signal $\xv^*$. The selection of $m^*$ is usually set to $m^* = K$.
The nonzero elements of $\xv^*$ are generated randomly from the standard normal distribution. The sensing matrix $\Amat$ is generated from i.i.d. norm distribution ${\cal N}(0, 1/M)$~\cite{Baraniuk08RIP}.
Other types sensing matrix are also used and similar observations have been obtained.

When the noise is consider, (signal-to-noise ratio) SNR = 60dB is used.
Different step-sizes $\alpha = \{0.9, 1.0, 1.1\}$ are conducted and the results are shown in corresponding figures. 

\begin{figure}[htbp!]
	\centering
	\includegraphics[width=0.5\textwidth, height = 11cm]{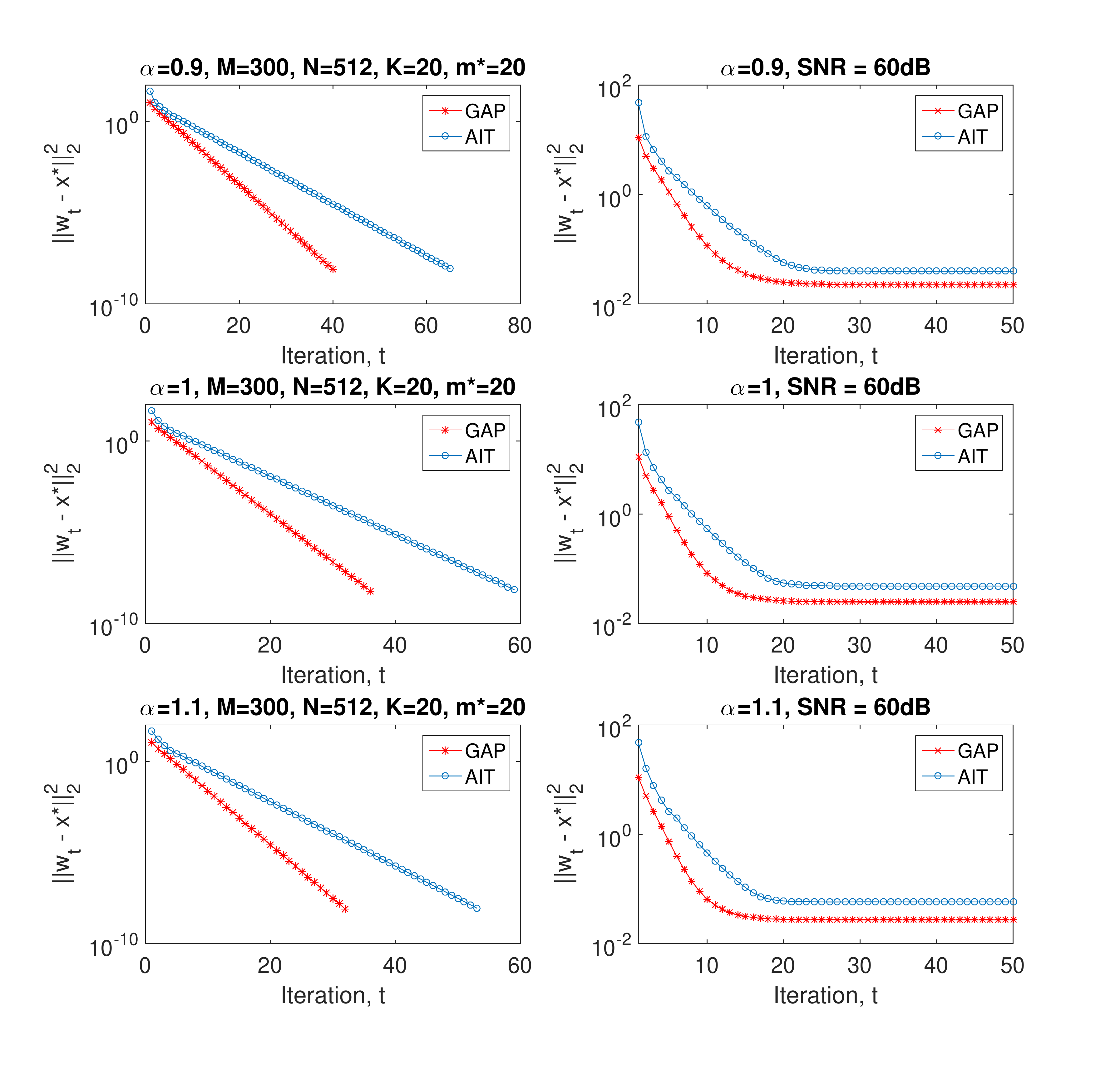}
	\vspace{-3mm}
	\caption{Linear convergence of GAP and AIT in both noiseless (left) and noisy case (right with SNR = 60dB). Step-size $\alpha = \{0.9, 1.0, 1.1\}$ for each row.}
	\label{fig:noise_t}
\end{figure}
\begin{figure}[htbp!]
	\centering
	\includegraphics[width=0.5\textwidth, height = 11cm]{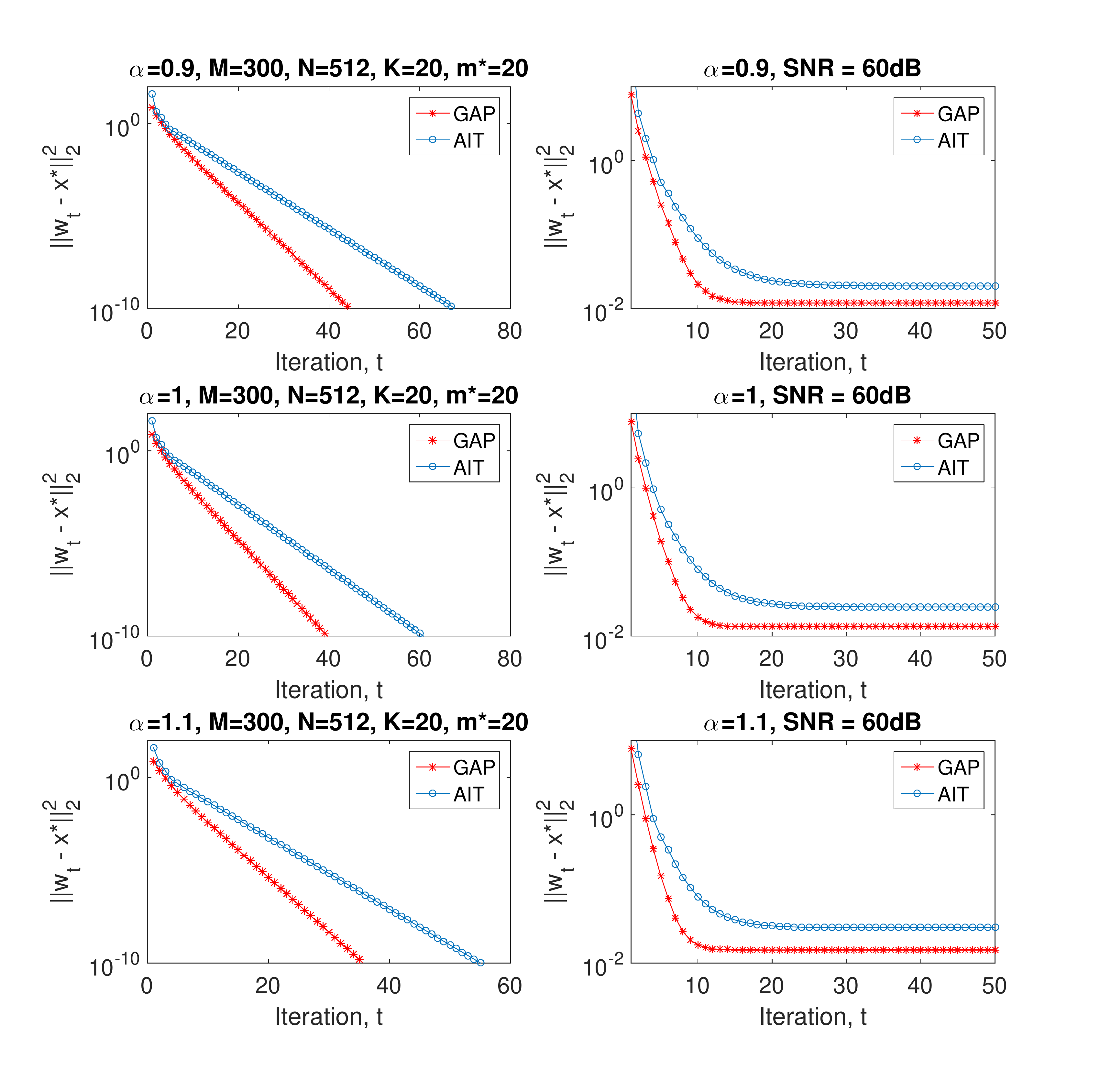}
	\vspace{-3mm}
	\caption{Linear convergence of GAP and AIT in both noiseless (left) and noisy case (right with SNR = 60dB) with {\em binary} sensing matrix. Step-size $\alpha = \{0.9, 1.0, 1.1\}$ for each row.}
	\label{fig:noise_t_bin}
\end{figure}
\subsection{Convergence Rate Justification} 
Figure~\ref{fig:noise_t} demonstrates the linear convergence of GAP and AIT. In the noiseless case (the left three plots in Figure~\ref{fig:noise_t}), $\wv_t$ converges to the original sparse signal with high precision, {e.g.}, $\|\wv_t - \xv^*\|_2^2 <10^{-8}$ with about 40 iterations for GAP and about 60 iterations for AIT.
In the noisy case (the right three plots in Figure~\ref{fig:noise_t}), $\|\wv_t - \xv^*\|_2^2$  decays exponentially until reaching some error bounds.
We can observed that in both cases, GAP converges faster than AIT for every step-size. In the noisy case, the error bounds that the recovered signal achieved of GAP are smaller than those of AIT.
These results verify the theoretical analysis in our theorems.
Similar observation can be found when the sensing matrix is binary distributed (Figure~\ref{fig:noise_t_bin})~\cite{Baraniuk08RIP}.

\begin{figure}[htbp!]
	\centering
	\includegraphics[width=0.4\textwidth, height = 5cm]{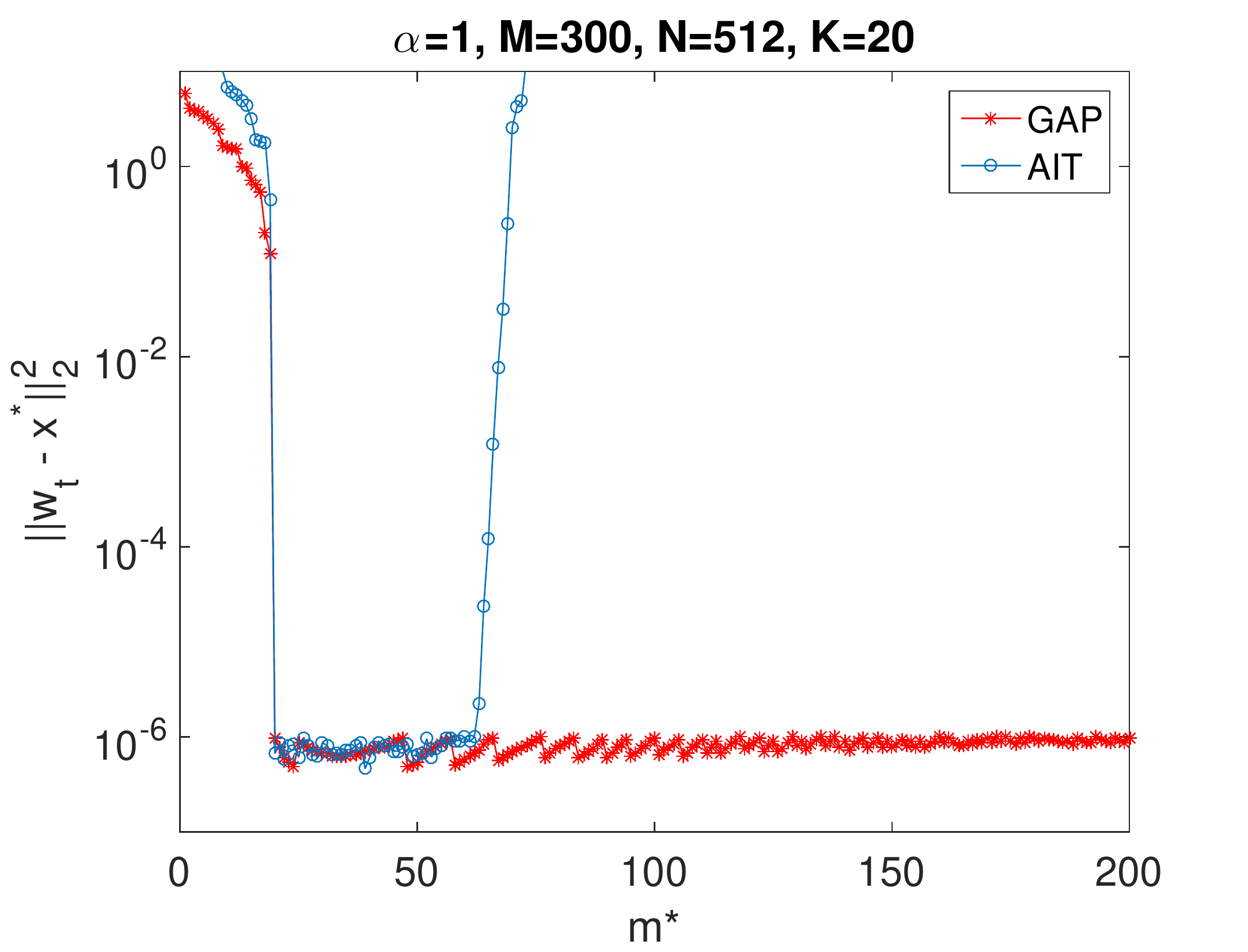}\\
	~\\	
	\includegraphics[width=0.4\textwidth, height = 5cm]{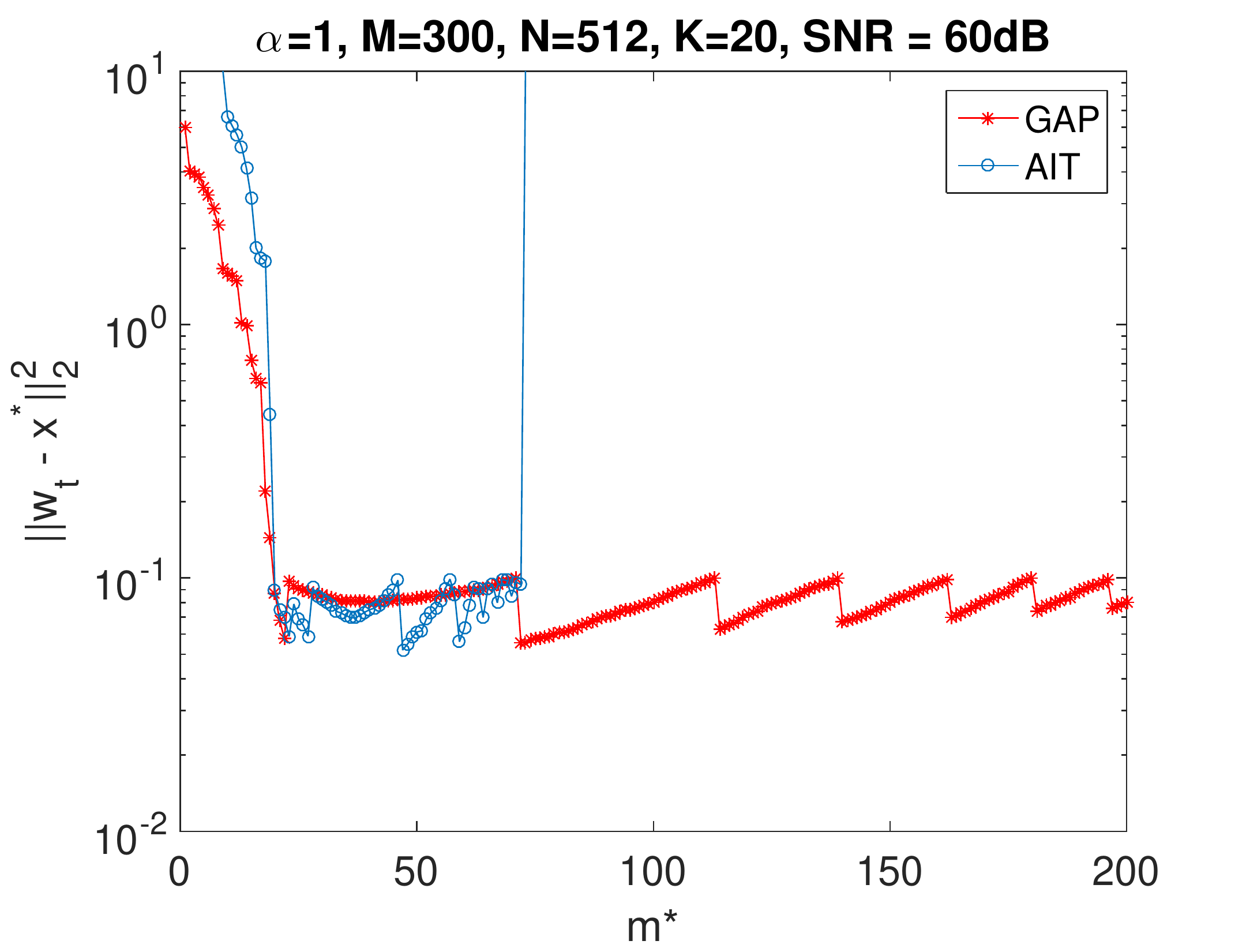}
	\vspace{-3mm}
	\caption{Different selections of $m^*$ for GAP and AIT in noiseless (top) and noisy (bottom, with SNR = 60dB) cases. Step-size $\alpha = 1$ is used.}
	\label{fig:m*}
\end{figure}

\subsection{Different Selections of $m^*$} 
In real cases, usually, we have no prior knowledge of the sparsity of the true signal, \ie, $K$ is unknown.
Figure~\ref{fig:m*} plots the reconstruction errors (stopped at $\|\wv_t - \xv^*\|_2^2 <10^{-6}$ in the noiseless case and at $\|\wv_t - \xv^*\|_2^2 <10^{-1}$ in the noisy case) with different selection of $m^*$ when $K=20$.
It can be seen that when $m^*\in[20, 62]$, both AIT and GAP can provide good estimates of the signal. However, GAP can still reconstruct the signal even when $K>62$, while AIT diverges at these sections of $m^*$.

\begin{figure}[htbp!]
	\centering
	\includegraphics[width=0.4\textwidth]{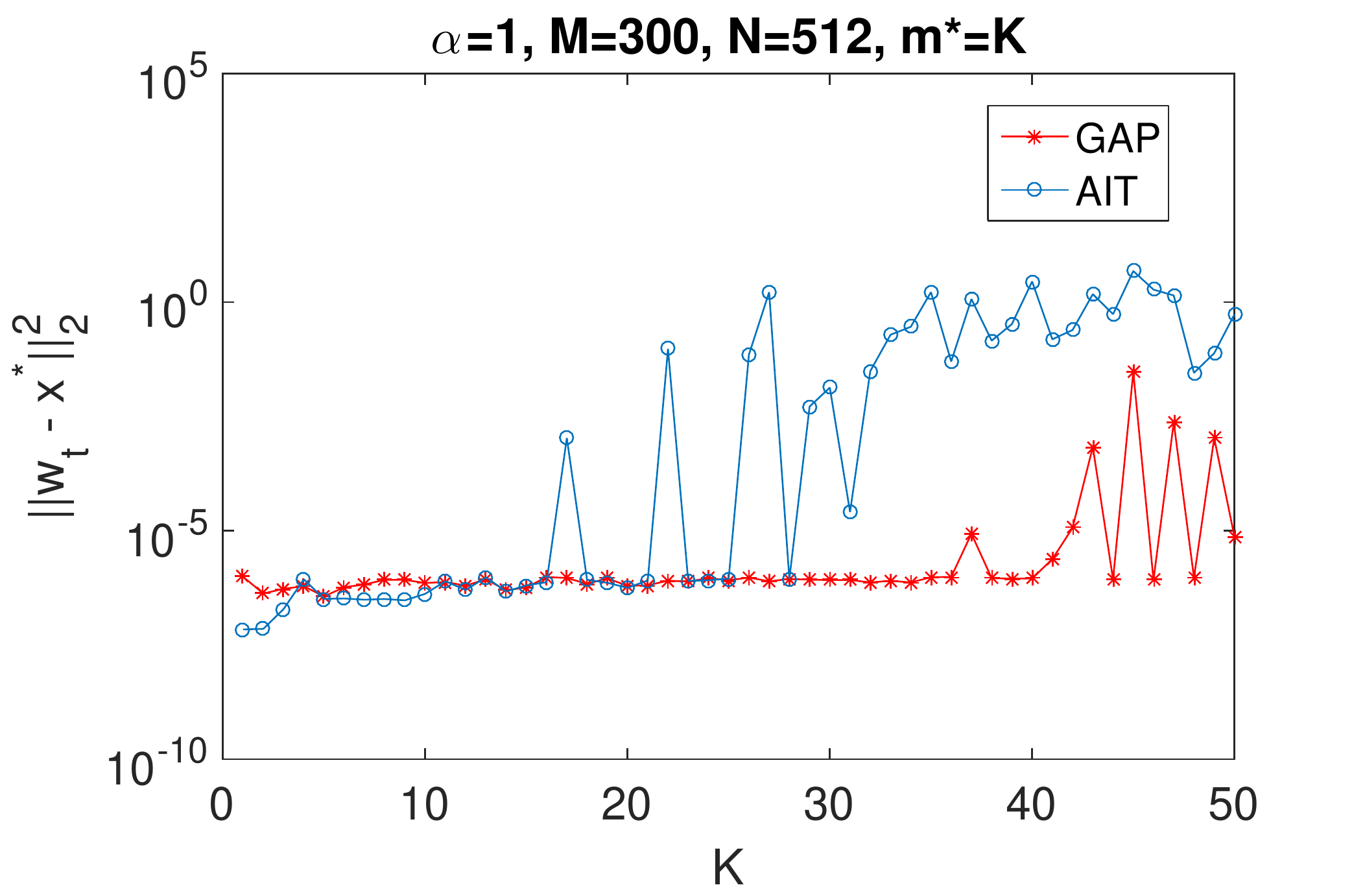}\\
		~\\	
	\includegraphics[width=0.4\textwidth]{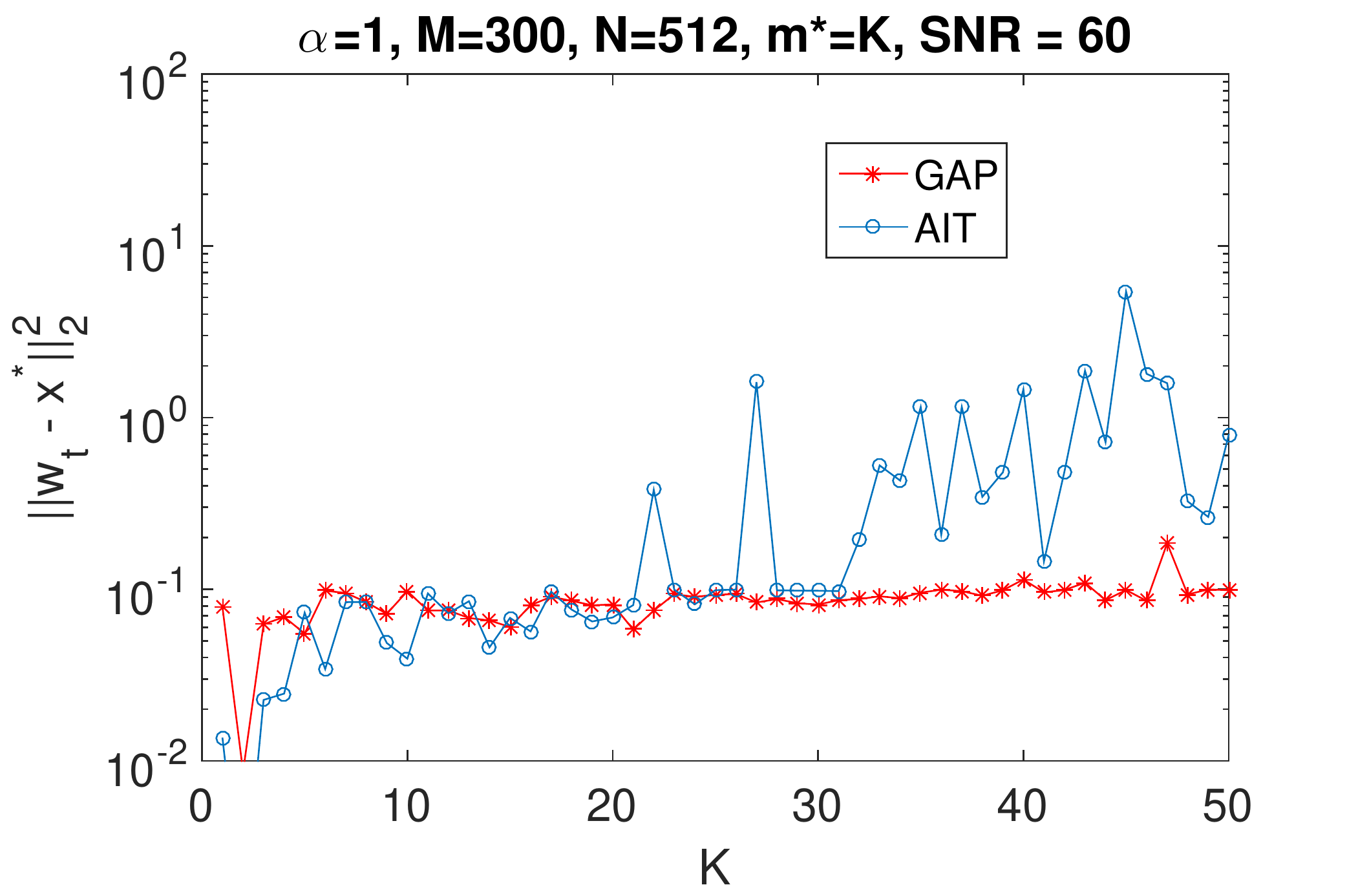}
	\vspace{-3mm}
	\caption{Reconstruction errors with diverse sparsity ($K$) of the signal $\xv^*$ in noiseless (top) and noisy (bottom, with SNR = 60dB) cases. Step-size $\alpha = 1$ is selected.}
	\label{fig:K}
\end{figure}
\begin{figure}[htbp!]
	\centering
	\includegraphics[width=0.5\textwidth, height = 11cm]{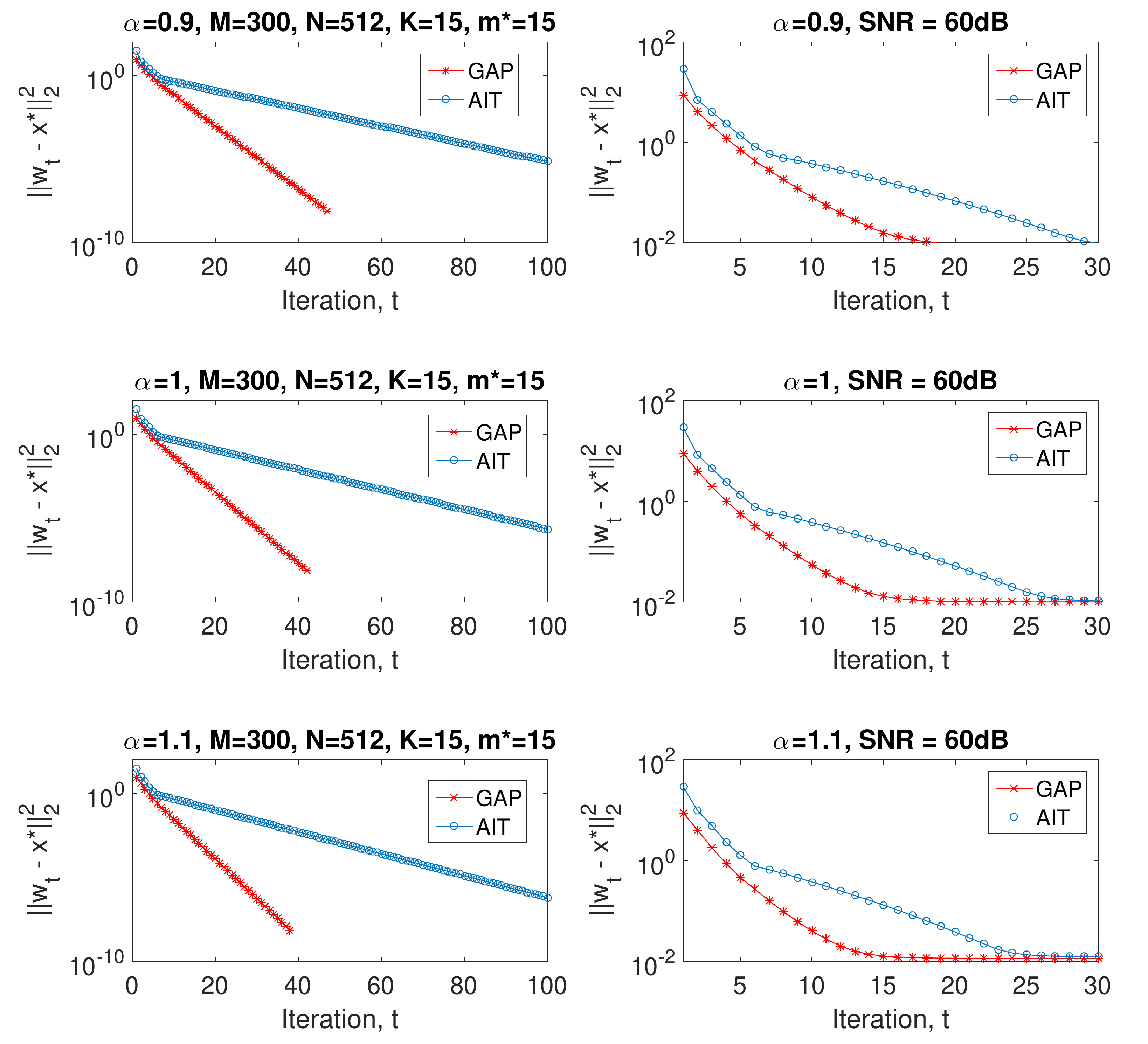}
	\vspace{-7mm}
	\caption{Linear convergence of GAP and AIT in both noiseless (left) and noisy cases (right with SNR = 60dB) with $K = 15$. Step-size $\alpha = \{0.9, 1.0, 1.1\}$ for each row.}
	\label{fig:noise_t_15}
\end{figure}

\begin{figure}[htbp!]
	\centering
	\includegraphics[width=0.5\textwidth, height = 11cm]{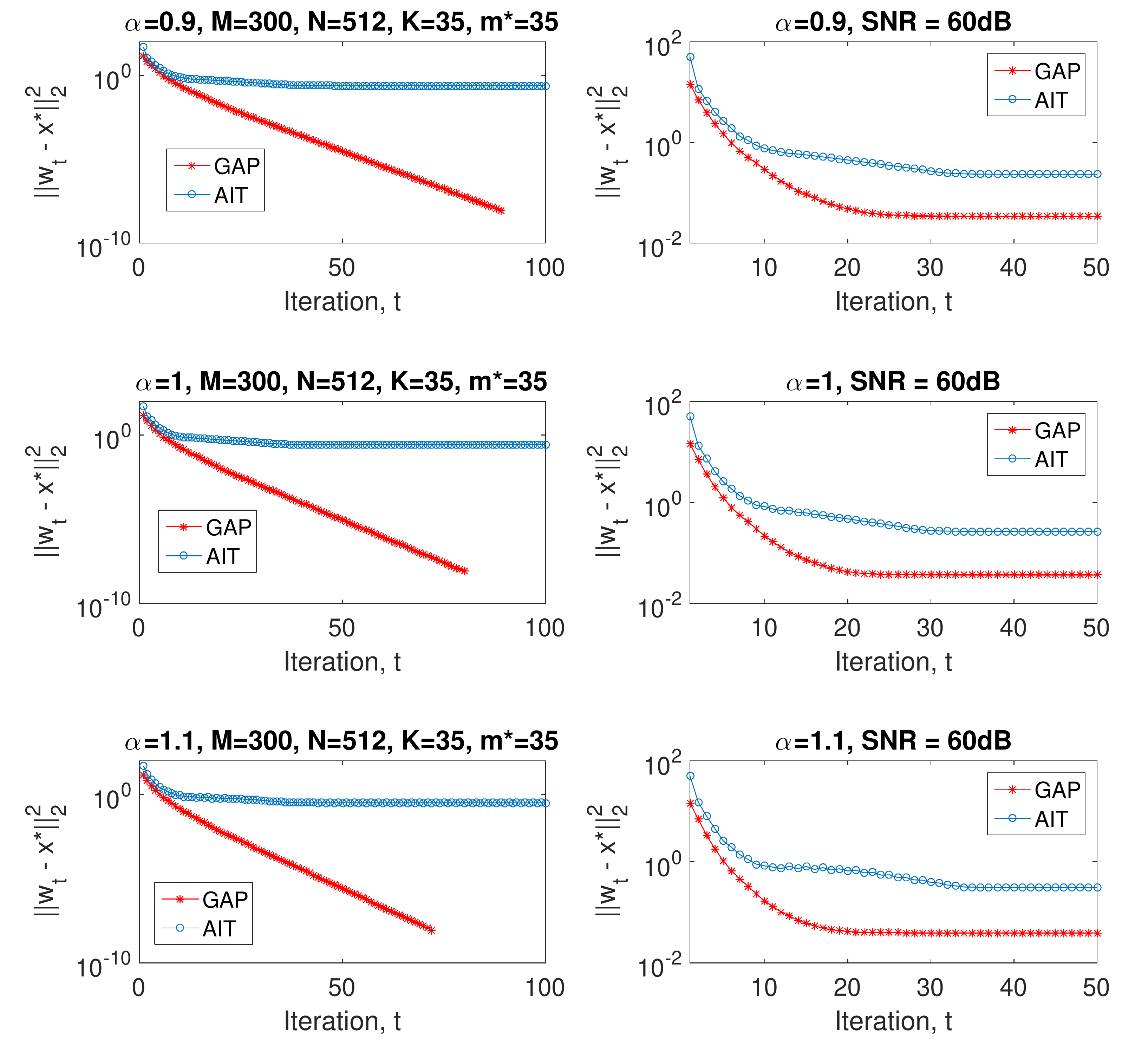}
	\vspace{-7mm}
	\caption{Linear convergence of GAP and AIT in both noiseless (left) and noisy cases (right with SNR = 60dB) with $K = 35$. Step-size $\alpha = \{0.9, 1.0, 1.1\}$ for each row. Note AIT fails to reconstruct the signal but GAP still estimate the single accurately.}
	\label{fig:noise_t_35}
\end{figure}
\subsection{Robustness of the Signal Sparsity}
We conduct the robustness of the algorithm regarding to the signal sparsity.
The same number of measurements $M = 300$ are used in the experiments with different $K$, the sparsity of the true signal.
Figure~\ref{fig:K} plots the reconstruction errors of GAP and AIT at different $K$.
It can be found that AIT can provide good estimates when $K\le20$, while GAP provides good reconstructions up to $K=42$.
This is further verified in Figures~\ref{fig:noise_t_15} and~\ref{fig:noise_t_35}.
In Figure~\ref{fig:noise_t_15}, $K=15$ is used to generate the signal and both GAP and AIT work well. By contrast, when $K=35$ is used to generate the signal (Figure~\ref{fig:noise_t_35}), GAP can reconstruct the signal while AIT fails.
Similar observations can be found with other numbers of measurements.

\begin{figure}[htbp!]
	\centering
	\includegraphics[width=0.4\textwidth]{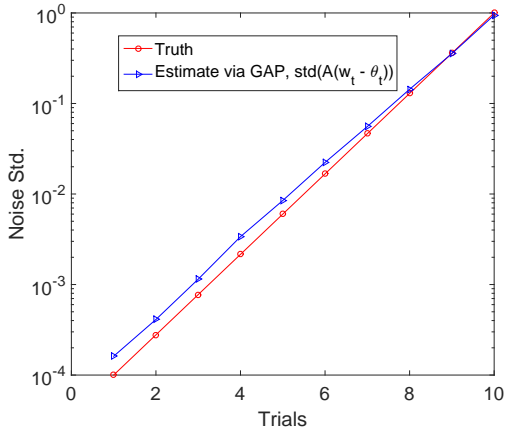}
	\vspace{-3mm}
	\caption{Noise estimation via GAP compared with the truth.}
	\label{fig:noise}
\end{figure}
\subsection{Noise Estimation}
As stated in Section~\ref{Sec:noise_estimate}, GAP can estimate the noise via the sequences of $\{\wv_t\}$ and $\{\thetav_t\}$. 
In our simulation, we add noise with different standard deviations (std.) and run GAP until it converges to an error bound.
The noise is estimated based on (\ref{eq:noise_es}). We compute the standard deviation of this estimated sequence of noise and compare with the truth in 
Figure~\ref{fig:noise}. It can be observed that GAP can estimate the noise accurately in a large range.

\begin{figure}[htbp!]
	\centering
	\includegraphics[width=0.5\textwidth, height = 5cm]{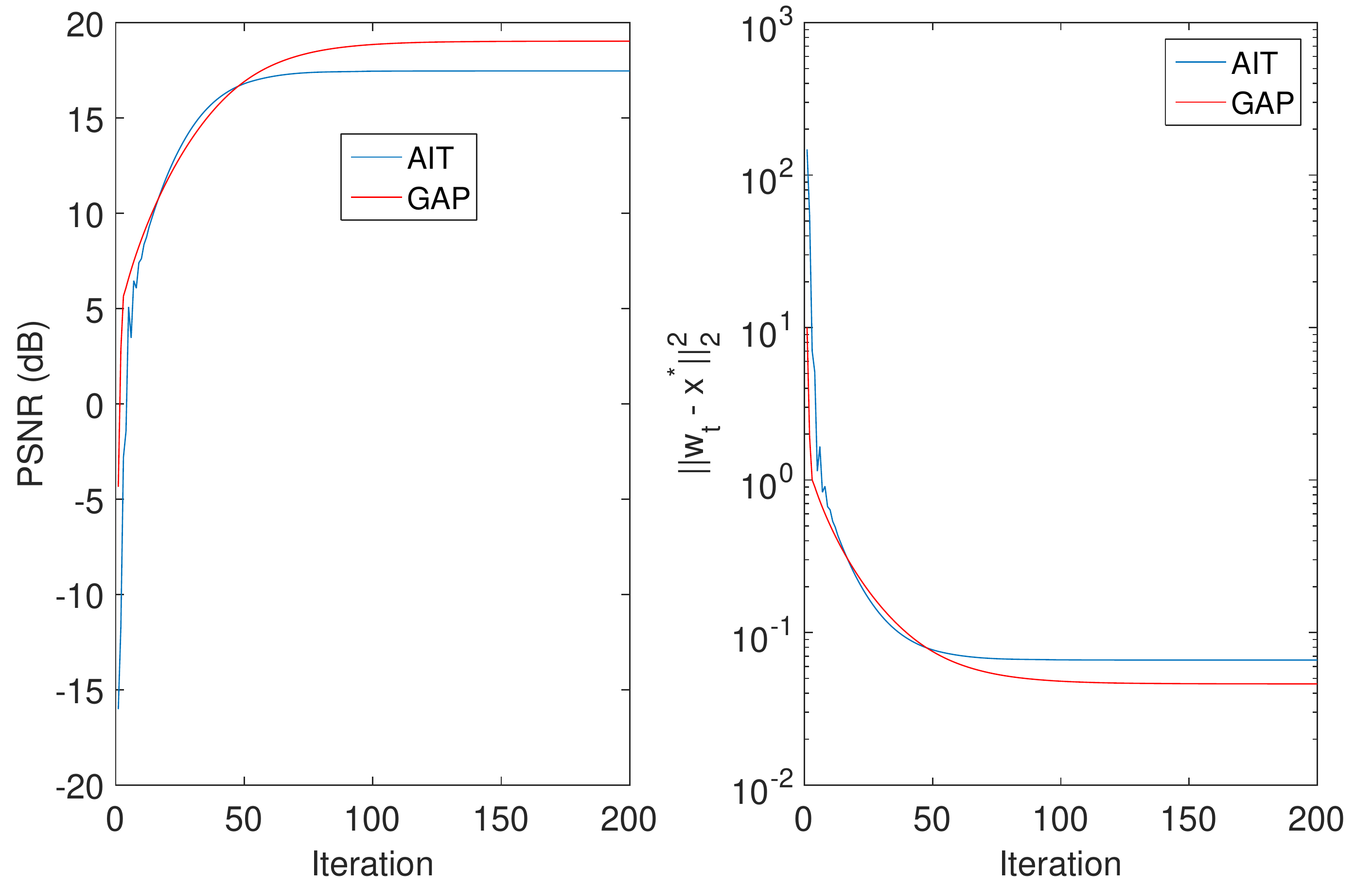}
	\vspace{-3mm}
	\caption{PSNR of reconstructed images and the reconstruction error per iteration in the {\em noiseless} case. Lenna image is used with size $256\times 256$ and $10\%$ (of the image pixels) measurements are used.}
	\label{fig:im_noisefree}
\end{figure}
\begin{figure}[htbp!]
	\centering
	\includegraphics[width=0.5\textwidth, height = 5cm]{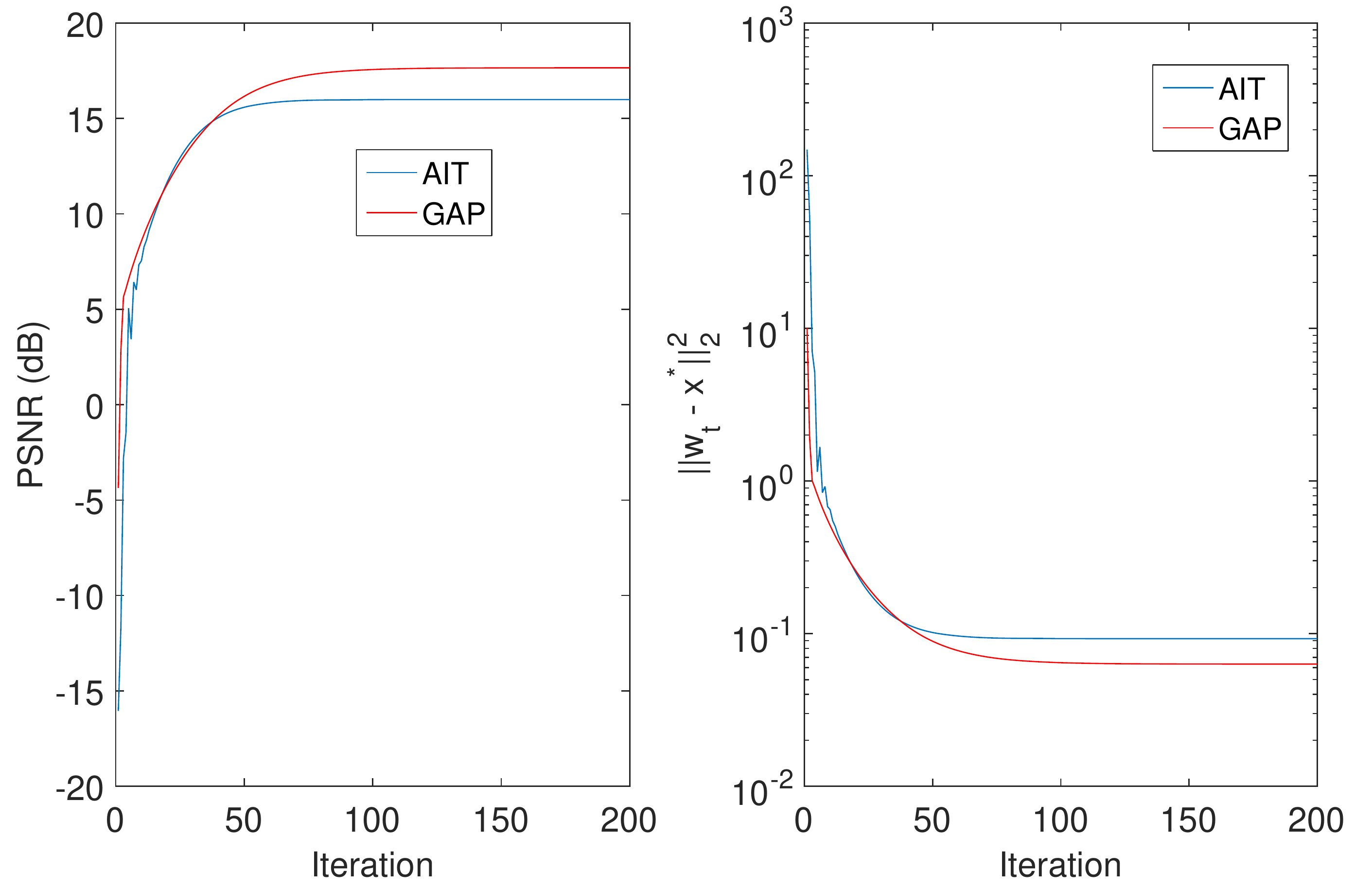}
	\vspace{-3mm}
	\caption{PSNR of reconstructed images and the reconstruction error per iteration in the {\em noisy} case with SNR = 60dB.}
	\label{fig:im_noise}
\end{figure}

\begin{figure}[htbp!]
	\centering
	\includegraphics[width=0.5\textwidth]{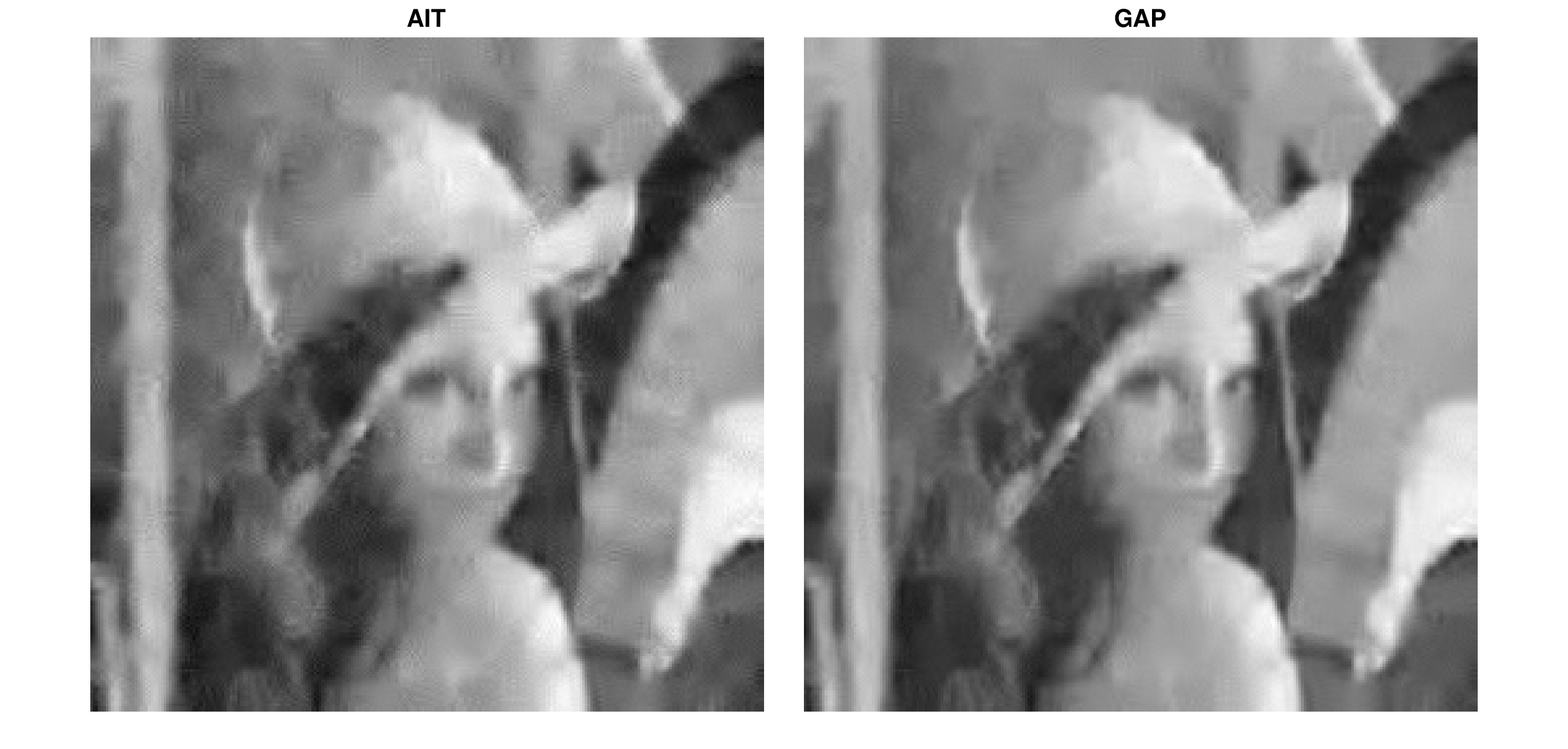}
	\vspace{-3mm}
	\caption{Reconstructed images in the {\em noiseless} case. Lenna image is used with size $256\times 256$ and $10\%$ (of the image pixels) measurements are used.}
	\label{fig:rec_noisefree}
\end{figure}
\begin{figure}[htbp!]
	\centering
	\includegraphics[width=0.5\textwidth]{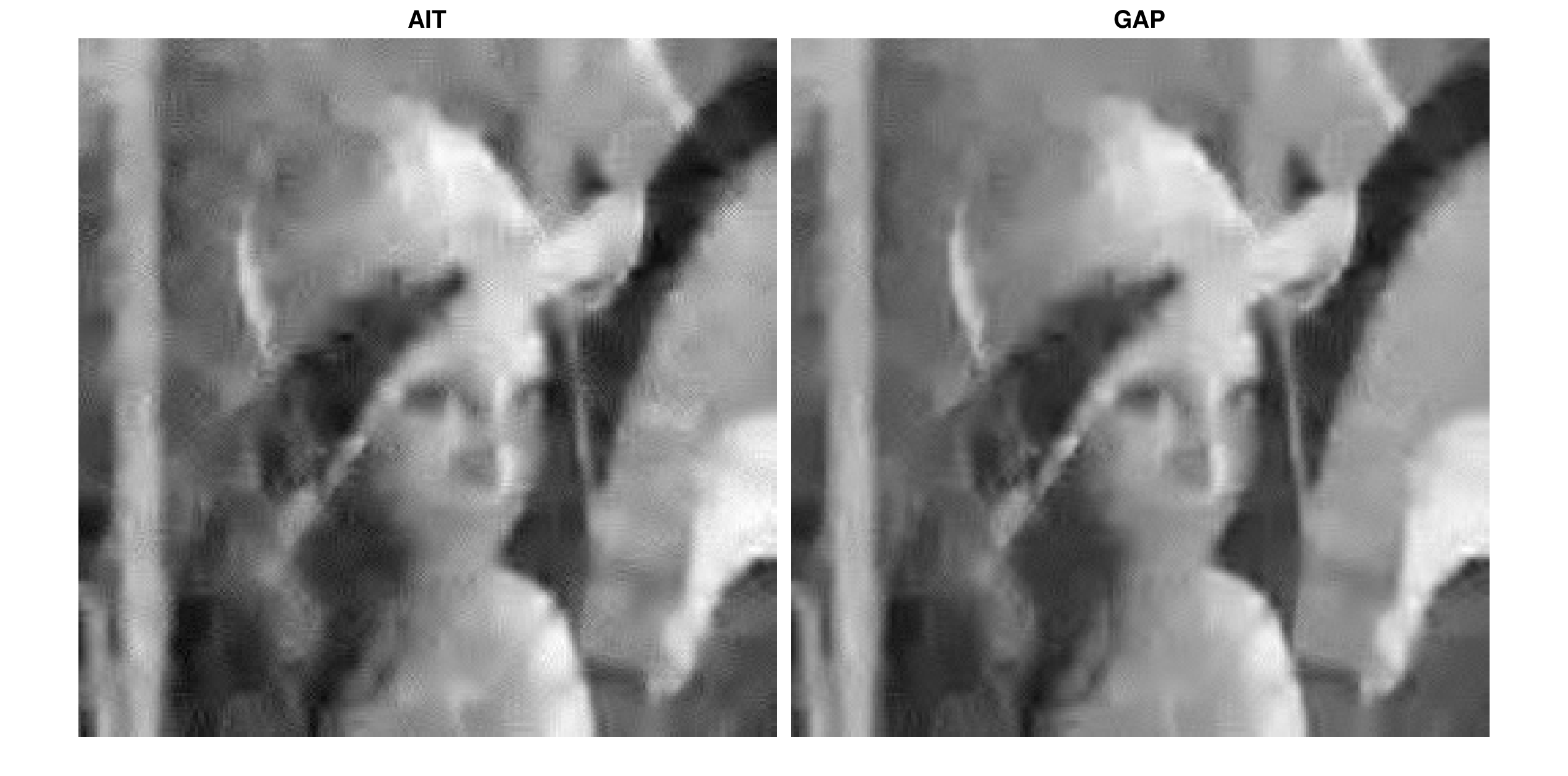}
	\vspace{-3mm}
	\caption{Reconstructed images in the {\em noisy} case with SNR = 60dB.}
	\label{fig:rec_noise}
\end{figure}

\subsection{Image Compressive Sensing}
Now we test the performance of GAP and AIT on the image compressive sensing problem. Consider the ``Lenna" image with size $256\times 256$ and we used $6554$ measurements ($10\%$ of the image pixels).
Reconstruction is based on the sparsity of DCT (Discrete Cosine Transformation) coefficients of the local overlapping patches~\cite{Yuan15Lensless}, which has been shown to perform better than the global wavelet transformations~\cite{Dong14TIP}.
The Gaussian sensing matrix is used and both noise-free and noisy cases (SNR = 60dB) are considered.
The PSNR (peak signal-to-noise ratio) of reconstructed images and the reconstruction errors are plotted versus iteration in Figure~\ref{fig:im_noisefree} (noiseless case) and Figure~\ref{fig:im_noise} (noisy case). 
The reconstructed images obtained by GAP and AIT are shown in Figure~\ref{fig:rec_noisefree} and Figure~\ref{fig:rec_noise}, respectively for the noiseless and noisy cases.
It can be seen that in both cases, GAP again performs better than AIT.
Similar observation can also be found in the video compressive sensing~\cite{Yuan14CVPR} and hyperspectral compressive sensing~\cite{Yuan15JSTSP} problems.

\section{Conclusion}
Investigated here is an adaptively generalized alternating projection algorithm with applications to compressive sensing.
The linear convergence of the algorithm has been derived based on the restricted isometry property condition of the sensing matrix.
The theoretical analysis has also been extended to adaptively iterative thresholding algorithms.
Both theoretical analysis and experimental results demonstrate that the generalized alternating projection algorithm converges faster than the adaptively iterative thresholding algorithm. 

In our experiments, we have found that sometimes, a larger step-size of the generalized alternating projection algorithm will lead to faster convergence, even when $\alpha$ approaches 2.
The conditions derived in this paper are sufficient conditions for the generalized alternating projection algorithm to converge.
It may be possible that the generalized alternating projection algorithm converges within a larger range of step-sizes than that derived in this paper.

\bibliographystyle{IEEEtran}

\end{document}